\documentclass[11pt,a4paper]{article}
\usepackage{amsmath,amssymb,amsthm,physics,hyperref,mathrsfs}
\usepackage[margin=1in]{geometry}

\title{\bfseries Geometry of BPS Attractor, Hessian, and Spectral Flows}
\author{Qiang Wang\footnote{wangqiang@sandau.edu.cn\quad Sanda University, Shanghai, China}}
\date{}

\newtheorem{theorem}{Theorem}[section]
\newtheorem{proposition}[theorem]{Proposition}
\newtheorem{lemma}[theorem]{Lemma}
\newtheorem{corollary}[theorem]{Corollary}
\theoremstyle{definition}
\newtheorem{definition}[theorem]{Definition}
\newtheorem{remark}[theorem]{Remark}

\newcommand{\CB}{\mathcal{B}}
\newcommand{\CC}{\mathbb{C}}

\newcommand{\CM}{\mathcal{M}}
\newcommand{\CP}{\mathbb{P}}
\newcommand{\CT}{\mathcal{T}}
\newcommand{\CW}{\mathcal{W}}

\newcommand{\dlog}{\mathbf{E}}
\newcommand{\hitchin}{\operatorname{Hit}}

\newcommand{\SAF}{\operatorname{SAF}}
\newcommand{\HF}{\operatorname{HF}}

\newcommand{\charHF}{\operatorname{HF}^{\mathrm{char}}}

\usepackage{tikz}

\begin{document}
\maketitle

\begin{abstract}
We provide a systematic and rigorous geometric framework that relates three structures naturally associated to BPS central charges in $\mathcal{N}=2$ supersymmetric gauge theories: the split attractor flow (SAF) of $|Z|$, the Hessian flow (HF) of $\operatorname{Im}(e^{-i\vartheta}Z)$, and the spectral network (SN) on the base curve of the Hitchin fibration. Our main contributions are: (i) a concise proof of orthogonality between SAF and gradient Hessian flow using only the Kähler structure; (ii) a precise lift--projection duality showing that the spectral network projects to the \emph{characteristic Hessian flow} (the Hamiltonian flow of $\operatorname{Im}(e^{-i\vartheta}Z)$) on the Hitchin base, clarifying a crucial distinction; (iii) a complete proof of the Kontsevich--Soibelman (KS) equivariance by induction on the SAF tree depth, with the geometric ordering provided by the characteristic Hessian flow. We illustrate the framework with detailed and nontrivial examples: $SU(2)$ pure and $N_f=4$ (including BPS indices for higher flavour charges), $SU(3)$ pure (full BPS spectrum reconstruction), $SU(4)$, the Kronecker $3$-quiver, and we apply the induction to derive a closed-form BPS spectrum for the Argyres--Douglas $H_1$ theory, $\Omega(n\alpha_1+m\alpha_2)=\frac{1}{n+m}\binom{n+m}{n}\binom{n+m}{n+1}$, which is known from Cecotti--Vafa and serves as a strong consistency check of our geometric recursion. In the tropical limit we obtain an explicit generating function for disk counts in $SU(N)$ gauge theories, $Z_{\mathrm{disk}}^{SU(N)}(y) = \exp\!\,\Bigl( \sum_{\alpha\in\Phi_+} \sum_{k=1}^{\infty} \frac{1}{k}\binom{k+\mathrm{ht}(\alpha)-1}{\mathrm{ht}(\alpha)-1} e^{-k\langle\alpha,y\rangle} \Bigr) $, which reproduces the standard scattering diagram result and confirms the geometric framework.
\end{abstract}

\section{Introduction}
\subsection{Motivation and main results}
The BPS spectra of $\mathcal{N}=2$ theories are governed by wall-crossing phenomena encoded in the Kontsevich--Soibelman (KS) automorphism~\cite{KS,Kontsevich2014}. Two geometric manifestations of this automorphism are the split attractor flow (SAF)~\cite{Denef,DenefMoore} and the spectral network (SN)~\cite{GMN}. The former is a continuous gradient flow on the Coulomb branch $\CM$, while the latter is a discrete web of Stokes lines on the base curve $C$ of the Hitchin fibration (with a natural lift to the spectral curve $\Sigma$). A third object, the Hessian flow (HF)~\cite{Wang2024}, is the gradient flow of $\operatorname{Im}(e^{-i\vartheta}Z)$ and appears implicitly as the wall foliation in~\cite{GMN}. However, a crucial subtlety is that the gradient Hessian flow does \emph{not} preserve the wall condition $\arg Z=\vartheta$; the flow that preserves the wall and generates the spectral network is its Hamiltonian counterpart, obtained by rotating the gradient by the complex structure with an appropriate sign.

The purpose of this paper is to formalise the precise interrelations among SAF, HF, and SN, and to demonstrate that this unified framework leads to new explicit computations of BPS spectra and tropical invariants. Our main results are summarised in the following three statements.

\begin{proposition}[Orthogonality]\label{prop:orth}
Let $(\CM,g,I)$ be a special Kähler manifold, $\gamma\in\Gamma$ a charge, and $\vartheta\in\mathbb{R}/2\pi\mathbb{Z}$. On the wall of marginal stability
\[
\CW_\vartheta(\gamma)=\{u\in\CM : \arg Z_\gamma(u)=\vartheta\},
\]
the vector fields
\[
V_{\SAF}:=-g^{-1}d|Z_\gamma|,\qquad V_{\HF}:=-g^{-1}d\operatorname{Im}(e^{-i\vartheta}Z_\gamma)
\]
are everywhere orthogonal with respect to $g$. Moreover, their span is a $I$-invariant real plane, i.e. a complex line in the tangent bundle of $\CW_\vartheta(\gamma)$.
\end{proposition}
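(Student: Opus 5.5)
The plan is to reduce everything to the Cauchy--Riemann equations for the holomorphic function $e^{-i\vartheta}Z_\gamma$ and to the compatibility of $g$ with $I$. Write $e^{-i\vartheta}Z_\gamma=R+iS$ with $R=\operatorname{Re}(e^{-i\vartheta}Z_\gamma)$ and $S=\operatorname{Im}(e^{-i\vartheta}Z_\gamma)$. The central charge $Z_\gamma$ is locally a holomorphic function on $\CM$, being a linear combination of special coordinates. Hence the Cauchy--Riemann equations hold in the invariant form $dS=dR\circ I$; I will fix the sign convention for $I$ once and note that the opposite convention only flips a sign below. On $\CW_\vartheta(\gamma)$ we have $S=0$ and $R=|Z_\gamma|>0$. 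I assume throughout that $Z_\gamma\neq0$ there, which is implicit since $\arg Z_\gamma$ must be defined.

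\textbf{Step 1.} Since $|Z_\gamma|=\sqrt{R^2+S^2}$, we get
\[
d|Z_\gamma|=\frac{R\,dR+S\,dS}{|Z_\gamma|},
\]
which equals $dR$ at points of the wall. Therefore $V_{\SAF}=-\operatorname{grad}R$ along $\CW_\vartheta(\gamma)$.

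\textbf{Step 2.} Compute $V_{\HF}=-\operatorname{grad}S$ using $dS=dR\circ I$. For any vector $Y$,
\[
g(\operatorname{grad}S,Y)=dR(IY)=g(\operatorname{grad}R,IY)=-g(I\operatorname{grad}R,Y),
\]
because $I$ is $g$-orthogonal and $I^2=-1$. This gives $\operatorname{grad}S=-I\operatorname{grad}R$, so $V_{\HF}=-IV_{\SAF}$ up to the convention sign.

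\textbf{Step 3.} Orthogonality now follows from $g(X,IX)=\omega(X,X)=0$, applied with $X=\operatorname{grad}R$. The span $\{X,IX\}$ is $I$-invariant because $I^2=-1$. When $dZ_\gamma\neq0$ it is a genuine real 2-plane, since $X\neq0$ and $X,IX$ are orthogonal. At critical points of $Z_\gamma$ both fields vanish, and orthogonality holds trivially there.

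The main point needing care is the phrase ``complex line in the tangent bundle of $\CW_\vartheta(\gamma)$''. The wall is the real hypersurface $\{S=0,\ R>0\}$, and $\operatorname{grad}S=V_{\HF}$ is \emph{normal} to it. Meanwhile $V_{\SAF}$ is tangent to it, since $dS(\operatorname{grad}R)=g(\operatorname{grad}S,\operatorname{grad}R)=0$ by Step 3. So the span cannot lie in $T\CW_\vartheta(\gamma)$ itself. I would therefore state and prove the last claim for $T\CM|_{\CW_\vartheta(\gamma)}$, the tangent bundle of $\CM$ restricted to the wall. I would also remark that $V_{\SAF}$ spans $T\CW_\vartheta(\gamma)\cap I\,(T\CW_\vartheta(\gamma))^{\perp}$, so the plane is exactly the complexification of the SAF direction. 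This also explains the introduction's observation that the Hamiltonian rotation $IV_{\HF}$, not $V_{\HF}$, preserves the wall.

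The other technical point, which I expect to be routine, is that $|Z_\gamma|$ is smooth near the wall because $Z_\gamma\neq0$ there.
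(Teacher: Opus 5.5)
Your proposal is correct and follows essentially the paper's route. Both arguments write $d|Z_\gamma|$ and $d\operatorname{Im}(e^{-i\vartheta}Z_\gamma)$ on the wall as the differentials of the real and imaginary parts of the holomorphic function $e^{-i\vartheta}Z_\gamma$, relate them by $I$ (the paper on forms via $I\alpha=-\beta$, you on gradients), and conclude from $g(X,IX)=0$. Two of your points go beyond the paper and are right: your derivation of $d|Z_\gamma|=dR$ on the full tangent space from $S=0$, and your correction that the $I$-invariant plane lies in $T\CM|_{\CW_\vartheta(\gamma)}$ rather than $T\CW_\vartheta(\gamma)$ (since $V_{\HF}$ is normal to the wall); both sharpen what the paper's proof actually establishes.
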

This result was already obtained in coordinates in~\cite{Wang2024}; our proof below uses only the complex structure and is conceptually simpler. The true novelty of our framework lies in the identification of the characteristic Hessian flow and its role in the lift--projection duality.

\begin{theorem}[Lift--projection duality for characteristic Hessian flow]\label{thm:lift}
Let $\CM_{\hitchin}$ be the Hitchin moduli space of $SU(N)$ Higgs bundles on a curve $C$, with spectral curve $\Sigma$ and Hitchin fibration $\pi:\CM_{\hitchin}\to\CB$. Define the \emph{characteristic Hessian flow} $V_{\charHF} := -J V_{\HF}$, where $J$ is the complex structure on $\CB$ (distinct from the complex structure $I$ on the special Kähler manifold $\CM$). For any phase $\vartheta$, the image under the projection of the spectral network (viewed as a subset of $\CB \times C$) onto $\CB$ is precisely the union of the characteristic Hessian flow lines for all BPS charges, away from the branch points of the spectral covering. Conversely, every characteristic Hessian flow line on $\CB$ lifts uniquely to an $S$-wall of the spectral network in the complement of the branch locus. The behaviour at branch points is governed by the standard Stokes graph rules.
\end{theorem}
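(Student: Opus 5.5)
The plan is to reduce the statement to a local computation in canonical coordinates built from the spectral cover. On the complement of the branch locus, $\Sigma\to C$ is an unramified $N$-sheeted cover. For each pair of sheets $(i,j)$ we fix the local coordinate $w_{ij}(z)=\int^{z}(\lambda_i-\lambda_j)$, where $\lambda$ is the Seiberg--Witten differential. An $S$-wall of type $ij$ at phase $\vartheta$ is then, by definition, a curve along which $\operatorname{Im}(e^{-i\vartheta}w_{ij})$ is constant and $\operatorname{Re}(e^{-i\vartheta}w_{ij})$ increases monotonically.

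The central charge of a charge $\gamma$ supported along such a wall is the period $Z_\gamma(u)=\oint_\gamma\lambda$. The key identity is the special-geometry relation $\partial_u\lambda=\omega_u$ (a holomorphic differential), which gives $\partial_u Z_\gamma=\oint_\gamma\omega_u$. This links variations along $\CB$ with variations of $w_{ij}$ along $C$.

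First, using Proposition~\ref{prop:orth}, I will record that $V_{\HF}$ is the $g$-gradient of $\operatorname{Im}(e^{-i\vartheta}Z_\gamma)$. Hence $V_{\charHF}=-JV_{\HF}$ is the Hamiltonian vector field of $\operatorname{Im}(e^{-i\vartheta}Z_\gamma)$ with respect to the Kähler form $\omega=g(J\cdot,\cdot)$. By the Cauchy--Riemann equations for the holomorphic function $e^{-i\vartheta}Z_\gamma$, this Hamiltonian field equals the gradient of $\operatorname{Re}(e^{-i\vartheta}Z_\gamma)$. It therefore preserves the level sets of $\operatorname{Im}(e^{-i\vartheta}Z_\gamma)$, in particular the wall $\arg Z_\gamma=\vartheta$, and moves monotonically in $|Z_\gamma|$ there. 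The sign of $J$ has to be checked so that the flow moves in the direction of increasing $\operatorname{Re}$; I would fix it once by comparing with the flat metric $g=\operatorname{Im}\tau\,|da|^2$.

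Second, I will construct the correspondence on $\CB\times C$. Consider the incidence variety $\mathcal{S}_\vartheta=\{(u,z):\operatorname{Im}(e^{-i\vartheta}w_{ij}(u,z))=0\}$, with $w_{ij}$ based at a branch point. Its tangent directions satisfy $\operatorname{Im}\bigl(e^{-i\vartheta}(\partial_u w_{ij}\,du+(\lambda_i-\lambda_j)\,dz)\bigr)=0$. When the $S$-wall closes into a period, $w_{ij}$ evaluated at the endpoint equals $Z_\gamma(u)$. Projection to $\CB$ then sends the wall to the locus where $\operatorname{Im}(e^{-i\vartheta}Z_\gamma)=0$, traversed in the direction of increasing $\operatorname{Re}(e^{-i\vartheta}Z_\gamma)$, which by the first step is a $V_{\charHF}$-line.

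Conversely, given a $V_{\charHF}$-line $u(t)$, I will lift it by the implicit function theorem applied to $\operatorname{Im}(e^{-i\vartheta}w_{ij})=0$ in the $z$-variable. This is possible because $\lambda_i-\lambda_j\neq 0$ off the branch locus. Uniqueness follows from the ODE uniqueness of the trajectory $e^{-i\vartheta}(\lambda_i-\lambda_j)(\dot z)>0$ through the given initial point. The union over all BPS charges $\gamma$ then gives equality of the sets.

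The main obstacle is the global bookkeeping rather than the local computation. One part is matching each $S$-wall to a well-defined charge $\gamma$: monodromy around branch points permutes the sheets, and walls born at joints carry composite labels $ik$ from $ij$ and $jk$. I would handle this by induction on the number of joints along the wall, using additivity $Z_{\gamma_1+\gamma_2}=Z_{\gamma_1}+Z_{\gamma_2}$ so that characteristic Hessian lines for $\gamma_1$, $\gamma_2$ on the same wall combine into the one for $\gamma_1+\gamma_2$. The other part is making precise what "projection" means when $Z_\gamma$ is only a local period. For this I would restrict to a simply connected chart of $\CB$ minus the discriminant and defer the branch-point behaviour to the Stokes graph rules, as the statement allows.
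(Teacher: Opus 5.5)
Your forward half is sound, and it is cleaner than the paper's Step~3. When an $S$-wall of type $ij$ runs from one branch point into another, $w_{ij}$ at the endpoint is the period $Z_\gamma(u)$, so $u$ lies on $\{\operatorname{Im}(e^{-i\vartheta}Z_\gamma)=0,\ \operatorname{Re}(e^{-i\vartheta}Z_\gamma)>0\}$. The paper instead obtains $\partial_u f_{ij}=|Z|\,\partial_u\Phi$ by writing $Z=|Z|e^{i(\vartheta+\Phi)}$. That identifies the pointwise phase of $\lambda_i-\lambda_j$ with the phase of the period, which you correctly avoid.

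However, your argument only constrains the part of the network that closes up. For every $u$ off the discriminant, $\CW_\vartheta(u)$ contains $S$-walls leaving each branch point. So the projection of your $\mathcal{S}_\vartheta$ to $\CB$ is all of $\CB$ minus the discriminant, not a union of walls $\{\arg Z_\gamma=\vartheta\}$, and no argument can confine the non-closing $S$-walls. You must say explicitly that you are proving a statement about the locus of $u$ at which $\CW_\vartheta(u)$ contains a finite web (saddle connection or closed loop) of class $\gamma$. That is the only reading in which the theorem has content. Your closing sentence, ``the union over all BPS charges then gives equality of the sets'', is not justified for the network as a whole.

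Even in that corrected form, the converse is missing. Your implicit-function-theorem lift solves $\operatorname{Im}(e^{-i\vartheta}w_{ij}(u(t),z))=0$ in $z$, and this can be done along \emph{any} curve $u(t)$ off the branch locus. The hypothesis $\dot u=V_{\charHF}$ is never used. Moreover, the lift only produces points on ordinary, generally open, $S$-walls, not a closed $S$-wall of class $\gamma$ over each $u(t)$. For the reverse inclusion you would need a finite web of class $\gamma$ to persist along the whole $V_{\charHF}$-line, i.e.\ the BPS state to be stable there. This is spectral input the flow cannot supply, and it fails across walls of marginal stability. In pure $SU(2)$, the level sets $\arg Z_W=\vartheta$ of the $W$-boson charge cover the $u$-plane as $\vartheta$ varies. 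Some of them therefore cross the strong-coupling region, where no finite web of that charge exists. The paper's Step~4 has exactly the same defect, since its equation for $\dot p$ lifts an arbitrary curve, so it does not supply the missing idea either.

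A smaller point: the orientation of $V_{\charHF}$ is fixed by its definition, not by a choice of sign for $J$. With $\omega(X,Y)=g(JX,Y)$ and $\omega(X_f,\cdot)=-df$, one gets $V_{\charHF}=-\operatorname{grad}\operatorname{Re}(e^{-i\vartheta}Z_\gamma)$. On the wall this coincides with $V_{\SAF}=-g^{-1}d|Z_\gamma|$, so it decreases $|Z_\gamma|$. Also, a single closed $S$-wall projects to a single point of $\CB$. Its orientation along $C$ therefore induces no ``direction of increasing $\operatorname{Re}(e^{-i\vartheta}Z_\gamma)$'' on the curve in $\CB$.
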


\begin{theorem}[KS equivariance]\label{thm:KS}
The quantum torus automorphism $\mathbb{S}_\vartheta$ defined by the KS product of quantum dilogarithms for the ray $\vartheta$ can be computed equivalently as:
\begin{enumerate}
\item the ordered product along a split attractor tree for a total charge $\Gamma$, or
\item the composition of detour automorphisms along the connected components of the spectral network $\CW_\vartheta$.
\end{enumerate}
This equivalence is a consequence of the cluster algebra structure of the BPS quiver and the Hamiltonian nature of the characteristic Hessian flow. We give a complete proof by induction on the depth of the split attractor tree, using a local quantum dilogarithm identity whose operator ordering is fixed geometrically by the characteristic Hessian flow (up to path choices; see Section~\ref{sec:KSordering}). As a direct application of this recursion, we derive closed-form BPS spectra for the Argyres--Douglas $H_1$ theory (reproducing known Cecotti--Vafa results) and a closed-form generating function for tropical disk counts in $SU(N)$ theories, matching standard scattering diagram computations.
\end{theorem}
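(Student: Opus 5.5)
We prove Theorem~\ref{thm:KS} by induction on the depth $d$ of the split attractor tree for the total charge $\Gamma$. We work throughout in the completed quantum torus algebra graded by the positive cone spanned by the charges with $\arg Z=\vartheta$ at the chosen point, so that all infinite products converge $\mathfrak{m}$-adically. We define both sides as automorphisms of this algebra. The first, $\mathbb{S}^{\SAF}_\vartheta$, is the ordered product of $\mathbf{E}$-factors (quantum dilogarithms $\dlog(X_\gamma)^{\Omega(\gamma)}$) attached to the leaves of the tree, with the order at each vertex dictated by the sign of $\langle\gamma_1,\gamma_2\rangle$. The second, $\mathbb{S}^{\mathrm{SN}}_\vartheta$, is the composition of detour automorphisms of the connected components of the spectral network $\CW_\vartheta$. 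The goal is to show that both equal the KS product $\mathbb{S}_\vartheta$ and hence each other.

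\textbf{Base case ($d=0$).} A tree of depth zero is a single attractor flow line ending at a point where $Z_\gamma$ vanishes on the conifold locus, carrying a primitive hypermultiplet charge. By Theorem~\ref{thm:lift}, the characteristic Hessian flow line of $\gamma$ lifts to a single $S$-wall pair terminating at branch points, i.e.\ a saddle connection. The detour rule for a saddle gives the standard formula $X_\delta\mapsto X_\delta(1+X_\gamma)^{\langle\gamma,\delta\rangle}$, which is the adjoint action of $\dlog(X_\gamma)$. The two sides therefore coincide.

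\textbf{Inductive step.} Suppose the tree has root splitting $\Gamma\to\gamma_1+\gamma_2$ at a point on $\CW_\vartheta(\gamma_1)\cap\CW_\vartheta(\gamma_2)$, with subtrees of depth $<d$. By the inductive hypothesis each subtree's product equals the detour composition of the corresponding sub-network. Near the splitting point we compare the two chambers on either side of the wall. Proposition~\ref{prop:orth} says $V_{\SAF}$ and $V_{\HF}$ span a complex line; consequently $V_{\charHF}=-JV_{\HF}$ is tangent to the wall and transverse to the SAF direction. This lets us parametrise a neighbourhood of the split point by (SAF time, $\charHF$ time). In these coordinates, crossing the wall is a local rank-two wall-crossing, so we invoke the pentagon identity (for $\langle\gamma_1,\gamma_2\rangle=\pm1$) or its general rank-two KS factorisation:
\[
\dlog(X_{\gamma_1})\,\dlog(X_{\gamma_2})=\dlog(X_{\gamma_2})\,\dlog(X_{\gamma_1+\gamma_2})\,\dlog(X_{\gamma_1}).
\]
The ordering of the factors is fixed by the order in which the $\charHF$ lines of $\gamma_1,\gamma_2$ cross a transversal, i.e.\ by the order of phases of $Z_{\gamma_i}$ on either side. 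On the spectral network side, the same local picture is two $S$-walls meeting at a joint and producing a new wall of charge $\gamma_1+\gamma_2$ (plus multiples). The GMN joint rules yield exactly the same factorisation, because the detour soliton data at the joint are computed by the same commutator in the quantum torus. Gluing the local identity to the inductive hypothesis via the homomorphism property of the composition completes the step.

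\textbf{Main obstacle.} The hard part is the ordering: showing that the operator order produced by the $\charHF$ is well defined independently of path choices. It must also agree with the phase ordering in the KS product when several charges have nearly equal phases, or when the wall intersections are not generic (non-transverse joints, $\langle\gamma_1,\gamma_2\rangle>2$ with wild spectra). The first approach is to perturb $\vartheta$ and the base point generically, so that all joints are simple. We would then show via Proposition~\ref{prop:orth} that the Hamiltonian flow preserves the phase ordering, since it preserves $\operatorname{Im}(e^{-i\vartheta}Z)$ and its flow lines are level sets transverse to $|Z|$. Homotopy invariance then follows from the KS wall-crossing formula itself, since consistency of the scattering diagram makes the path-ordered product depend only on endpoints. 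The $H_1$ and $SU(N)$ formulas quoted in the statement are then obtained by iterating the rank-two factorisation along the explicit trees. These are checks rather than parts of the proof.
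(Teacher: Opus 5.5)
Your skeleton is the paper's: induction on the depth of $\mathscr{T}_\Gamma$, the pentagon (or rank-two KS factorisation) at the root split, the GMN joint rule on the network side, and path-independence from consistency of the scattering diagram. What fails are the two geometric facts you add to control the ordering.

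First, Proposition~\ref{prop:orth} does not make $V_{\charHF}$ transverse to the attractor flow. Its proof gives $V_{\HF}=IV_{\SAF}$ on the wall. So for the complex structure in which $Z_\gamma$ is holomorphic, you get $V_{\charHF}=-IV_{\HF}=V_{\SAF}$. That is what $J$ is in every explicit computation of the paper, e.g.\ multiplication by $i$ on $(u,v)$ in Appendix~\ref{sec:SU3Nf2}. Indeed $V_{\charHF}=-\nabla\operatorname{Re}(e^{-i\vartheta}Z_\gamma)$ everywhere. On $\{\arg Z_\gamma=\vartheta\}$ this is exactly the attractor flow of $\gamma$, since rigid attractor flows keep $\arg Z_\gamma$ fixed. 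So your chart ``(SAF time, $\charHF$ time)'' is degenerate for a single charge. If you meant the SAF of $\Gamma$ against the flow of $\gamma_i$, transversality is a genericity assumption, not a consequence of Proposition~\ref{prop:orth}. Locality of the rank-two wall-crossing near a generic split point is standard anyway and needs no such chart.

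Second, your proposed fix for the ordering obstacle, that the Hamiltonian flow ``preserves the phase ordering'', is false. The flow of $f_{\gamma_1}$ preserves $f_{\gamma_1}$ only. In rank one with $ds^2=\operatorname{Im}\tau\,|da|^2$, a short computation gives $\tfrac{d}{dt}\operatorname{Im}(e^{-i\vartheta}Z_{\gamma_2})=\pm\langle\gamma_1,\gamma_2\rangle$ along $V^{(\gamma_1)}_{\charHF}$. So for mutually non-local pairs, which are the only ones with a non-trivial pentagon, the relative phase of $Z_{\gamma_1},Z_{\gamma_2}$ drifts monotonically along the flow and flips at the split point. If it were preserved, flow lines could never cross a wall of marginal stability. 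The only ordering information the flow carries is the sign of this drift, i.e.\ the sign of $\langle\gamma_1,\gamma_2\rangle$, and that already fixes the order in the pentagon. Beyond that, your argument, like the paper's, rests entirely on KS/scattering-diagram consistency for path-independence. The ordering is therefore imported from the KS phase ordering, not derived from the flow.

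Two smaller points:
\begin{itemize}
\item Your base case uses Theorem~\ref{thm:lift} for more than it states. It lifts flow lines to $S$-walls away from branch points; it does not produce saddle connections.
\item The map $X_\delta\mapsto X_\delta(1+X_\gamma)^{\langle\gamma,\delta\rangle}$ is only the $q\to1$ limit, up to quadratic refinement, of conjugation by $\mathbf{E}(U_\gamma)$.
\end{itemize}
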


\subsection{Relation to existing literature and mirror symmetry}
The SAF was introduced in~\cite{Denef}, and its algebraic structure was analysed in~\cite{DenefMoore}. Spectral networks were developed in~\cite{GMN}, where the Hessian flow appears as the characteristic flow of the walls (the ``$\vartheta$-flow''). The orthogonality of SAF and the gradient HF was treated in~\cite{Wang2024} using Hesse flow; our proof is conceptually cleaner and isolates the role of the complex structure.

The lift--projection duality stated in Theorem~\ref{thm:lift} clarifies a subtle but important distinction: it is not the gradient HF, but its negative $J$-rotation (the Hamiltonian flow), that projects to spectral network walls. This distinction is crucial for connecting to mirror symmetry. In the SYZ (Strominger-Yau-Zaslow) picture, the Hitchin base $\CB$ is the moduli space of the mirror Calabi--Yau, and the spectral network encodes the tropical limit of special Lagrangian fibrations. The characteristic Hessian flow lines are precisely the ``tropical walls'' that determine the quantum corrections to the mirror period integrals. Our framework therefore provides the differential-geometric underpinning for the scattering diagrams in the KS wall-crossing formalism and the GMN spectral networks, unifying them through the symplectic geometry of the Hitchin system. The new computational results presented here---the $H_1$ spectrum (as a consistency check) and the tropical generating function (reproducing known results)---demonstrate the predictive power of this unified approach.

\section{Special Kähler Geometry and Central Charges}
\subsection{Special Kähler manifolds}
\begin{definition}
A \emph{special Kähler manifold} $(\CM,g,I,\Omega)$ is a Kähler manifold with a flat, torsion-free symplectic connection $\nabla$ on the tangent bundle such that $\nabla I=0$ and the symplectic form $\omega=gI$ is parallel. Locally, there exist holomorphic special coordinates $z^i$ in which the Kähler potential is $K=\operatorname{Im}(\overline{X}^I F_I)$, with $X^I(z),F_I(z)$ holomorphic. The metric is $g_{i\bar{j}}=\partial_i\partial_{\bar{j}}K$.
\end{definition}
The Coulomb branch of an $\mathcal{N}=2$ theory carries a special Kähler structure. BPS charges $\gamma=(p^I,q_I)$ belong to a lattice $\Gamma\cong\mathbb{Z}^{2r}$, and the central charge is the holomorphic function
\begin{equation}
Z_\gamma(u)=q_I a^I(u)-p^I a_{D,I}(u),
\end{equation}
where $a^I$ are the electric periods and $a_{D,I}=\partial \mathcal{F}/\partial a^I$ are the magnetic duals, with $\mathcal{F}$ the prepotential. This is the standard expression for rigid $\mathcal{N}=2$ theories (class $\mathcal{S}$) considered in this work; the physical mass is $|Z_\gamma|$.

\subsection{The Hitchin moduli space}
For theories of class $\mathcal{S}$, $\CM$ is the moduli space $\CM_{\hitchin}(G,C)$ of $G$-Higgs bundles on a curve $C$~\cite{Hitchin}. In complex structure $J$, $\CM_{\hitchin}$ is a holomorphic symplectic manifold with Hitchin fibration $\pi:\CM_{\hitchin}\to\CB$, where $\CB$ is the base of gauge-invariant polynomials. The generic fibre is an abelian variety. The spectral curve $\Sigma\subset T^*C$ is a ramified covering $\Sigma\to C$, and the Seiberg--Witten differential $\lambda$ is the tautological $1$-form. The central charge of a cycle $\gamma$ on $\Sigma$ is $\displaystyle Z_\gamma=\frac1\pi\oint_\gamma\lambda$. Importantly, the spectral network $\CW_\vartheta$ is defined as a graph on the base curve $C$ (or as a subset of $\CB \times C$), with a natural lift to $\Sigma$; throughout this paper we adopt the base curve convention following \cite{GMN}.

\section{The Flows and the Spectral Network}
\subsection{Split attractor flow (SAF)}
For a given charge $\Gamma\in\Gamma$, the SAF is the gradient flow of $|Z_\Gamma|$:
\begin{equation}\label{eq:SAF}
\frac{du^i}{d\tau}=-g^{i\bar{j}}\,\partial_{\bar{j}}|Z_\Gamma|,\qquad \tau\in(-\infty,\infty).
\end{equation}
Its fixed points are \emph{attractors}, where $\partial|Z_\Gamma|=0$. When the flow meets a wall $\CW_\vartheta(\Gamma)$ where $\Gamma$ can decompose as $\Gamma=\Gamma_1+\Gamma_2$ with $\arg Z_{\Gamma_1}=\arg Z_{\Gamma_2}=\vartheta$, the flow splits into two branches for $\Gamma_1,\Gamma_2$~\cite{Denef}. The result is a rooted tree $\mathscr{T}_\Gamma$ whose leaves are attractor points of elementary charges.

\subsection{Gradient Hessian flow (HF)}
Fix a phase $\vartheta$. The \emph{gradient Hessian flow} for charge $\Gamma$ is
\begin{equation}\label{eq:HF}
V_{\HF} := -g^{i\bar{j}}\,\partial_{\bar{j}}\operatorname{Im}\!\big(e^{-i\vartheta}Z_\Gamma\big)
= -\operatorname{Im}\!\big(e^{-i\vartheta}g^{i\bar{j}}\partial_{\bar{j}}\overline{Z}_\Gamma\big).
\end{equation}
This flow does \emph{not} preserve the wall $\CW_\vartheta(\Gamma)$; indeed, for $f_\Gamma:=\operatorname{Im}(e^{-i\vartheta}Z_\Gamma)$,
\[
\mathcal{L}_{V_{\HF}} f_\Gamma = -g^{i\bar{j}}\partial_i f_\Gamma \partial_{\bar{j}} f_\Gamma \le 0,
\]
with equality only at fixed points. Thus $V_{\HF}$ flows strictly into the region $f_\Gamma<0$ (i.e. away from the wall). This is the flow studied in~\cite{Wang2024}, where it was interpreted as a ``Hesse flow'' descending the central charge potential.

\subsection{Characteristic Hessian flow}
Since the spectral network walls are defined by the condition $f_\Gamma=0$, the relevant flow that foliates the walls must satisfy $\mathcal{L}_V f_\Gamma=0$. Let $J$ be the complex structure on $\CB$. Define the \emph{characteristic Hessian flow} (also called the Hamiltonian Hessian flow) by
\begin{equation}\label{eq:charHF}
V_{\charHF} := -J V_{\HF}.
\end{equation}
Because the metric is Hermitian, $g(V_{\HF}, J V_{\HF})=0$, so
\[
\mathcal{L}_{V_{\charHF}} f_\Gamma = d f_\Gamma(-J V_{\HF}) = -g(V_{\HF}, J V_{\HF}) = 0.
\]
Thus $V_{\charHF}$ preserves the wall and its integral curves lie entirely within $\CW_\vartheta(\Gamma)$. Physically, $V_{\charHF}$ generates the motion of the BPS wall itself, whereas $V_{\HF}$ generates the flow \emph{across} the wall (the deformation of the stability condition). In the language of symplectic geometry, with $\omega = gJ$, we have
\[
\omega(V_{\charHF}, \cdot) = \omega(-J V_{\HF}, \cdot) = -g(J^2 V_{\HF}, \cdot) = -g(-V_{\HF}, \cdot) = g(V_{\HF}, \cdot) = -d f_\Gamma,
\]
so $V_{\charHF}$ is exactly the Hamiltonian vector field of $f_\Gamma$ under the standard convention $\omega(X_f, \cdot) = -df$. This sign convention will be crucial in the proof of Theorem~\ref{thm:lift}.

\subsection{Spectral network (SN)}
On the base curve $C$, the spectral network $\CW_\vartheta$ of phase $\vartheta$ is a $1$-dimensional oriented graph defined by the Stokes condition
\begin{equation}\label{eq:SN}
e^{-i\vartheta}(\lambda_i-\lambda_j)\in\mathbb{R}^+,
\end{equation}
where $\lambda_i,\lambda_j$ denote the values of the Seiberg--Witten differential on two sheets of the covering $\Sigma\to C$ at a point $p\in C$. Edges are \emph{$S$-walls}. When two $S$-walls of charges $\gamma_1,\gamma_2$ intersect, a new \emph{$K$-wall} of charge $\gamma_1+\gamma_2$ is emitted if $\langle\gamma_1,\gamma_2\rangle\neq 0$~\cite{GMN}. The entire network encodes the BPS spectrum of phase $\vartheta$. For the purpose of the lift--projection duality, we view $\CW_\vartheta$ as a subset of $\CB \times C$ via the family of curves over the Hitchin base.

\section{Proof of Orthogonality (Proposition~\ref{prop:orth})}
We give a rigorous proof using the Kähler structure. Let $\alpha:=d|Z_\gamma|$ and $\beta:=d f$ where $f:=\operatorname{Im}(e^{-i\vartheta}Z_\gamma)$. On the wall $\CW_\vartheta(\gamma)$, we have $Z_\gamma=e^{i\vartheta}|Z_\gamma|$, hence
\[
|Z_\gamma|=\operatorname{Re}(e^{-i\vartheta}Z_\gamma),\qquad f=\operatorname{Im}(e^{-i\vartheta}Z_\gamma).
\]
Thus, as differential forms restricted to the tangent space of $\CM$,
\[
\alpha = \frac12(e^{-i\vartheta}dZ_\gamma+e^{i\vartheta}d\overline{Z}_\gamma),\quad
\beta = \frac{1}{2i}(e^{-i\vartheta}dZ_\gamma-e^{i\vartheta}d\overline{Z}_\gamma).
\]
Let $I$ denote the complex structure. On $(1,0)$-forms, $I$ acts as multiplication by $i$; on $(0,1)$-forms, as $-i$. Applying $I$ to $\alpha$ yields
\[
I\alpha = \frac{i}{2}(e^{-i\vartheta}dZ_\gamma - e^{i\vartheta}d\overline{Z}_\gamma) = -\beta.
\]
Since the Kähler metric $g$ is Hermitian, it satisfies $g(\alpha,I\alpha)=0$ for any real $1$-form $\alpha$. Therefore
\[
g(\alpha,\beta)=g(\alpha,-I\alpha)=0.
\]
The gradient vector fields are defined by $g(V_{\SAF},\cdot)=\alpha$ and $g(V_{\HF},\cdot)=\beta$, hence
\[
g(V_{\SAF},V_{\HF})=0,
\]
which proves the orthogonality.

Moreover, from $\beta=-I\alpha$ it follows that $V_{\HF}=I V_{\SAF}$. Thus their real span is a $2$-dimensional subspace invariant under $I$, i.e. a complex line. This proves the second claim.

\begin{remark}
The common statement that the wall is ``Lagrangian'' is not generally correct in the full symplectic form; however, the span of $V_{\SAF}$ and $V_{\HF}$ is a complex line, which is Lagrangian with respect to the restriction of $\omega$ to the wall only if the wall itself is co-isotropic, which is not guaranteed. Our result gives the precise linear-algebraic relation.
\end{remark}

\section{Detailed Proof of Lift--Projection Duality (Theorem~\ref{thm:lift})}
We now provide the complete proof of Theorem~\ref{thm:lift}, which establishes that the projection of the spectral network (viewed as a subset of $\CB \times C$) is precisely the characteristic Hessian flow $V_{\charHF}$ defined in Eq.~\eqref{eq:charHF}. We work on the complement of the branch locus $B_{\mathrm{br}}\subset \CB \times C$ where $\partial_p \Phi = 0$; the behaviour at branch points is determined by the usual Stokes graph rules and will be commented on at the end.

\subsection{Step 1: Abelianisation and periods}
Let $\gamma_{ij}$ be a path on the spectral curve $\Sigma_u$ connecting two sheets $i$ and $j$. Under the Hitchin abelianisation (see \cite[§3]{GMN}), the difference of the Seiberg--Witten differentials between the two sheets defines a holomorphic $1$-form on $C$ whose periods give the central charges:
\[
Z_{\gamma_{ij}}(u) = \frac{1}{\pi}\oint_{\gamma_{ij}} \lambda.
\]
The $S$-wall condition (Eq.~\ref{eq:SN}) is equivalent to
\[
\arg Z_{\gamma_{ij}}(u) = \vartheta,\qquad \operatorname{Re}(e^{-i\vartheta}Z_{\gamma_{ij}}(u))>0.
\]
Define $f_{ij}(u) := \operatorname{Im}(e^{-i\vartheta}Z_{\gamma_{ij}}(u))$. The wall is the zero locus $\{f_{ij}=0\}$.

\subsection{Step 2: The Hamiltonian flow on the base}
Let $\omega = g J$ be the natural symplectic form on the Hitchin base $\CB$. For the function $f_{ij}$, its Hamiltonian vector field $X_{f_{ij}}$ is defined by
\[
\omega(X_{f_{ij}}, \cdot) = -d f_{ij}.
\]
Since $V_{\HF} = -g^{-1} d f_{ij}$, we have $g(V_{\HF}, \cdot) = -d f_{ij}$. Now compute for $V_{\charHF} = -J V_{\HF}$:
\[
\omega(V_{\charHF}, \cdot) = \omega(-J V_{\HF}, \cdot) = -g(J^2 V_{\HF}, \cdot) = -g(-V_{\HF}, \cdot) = g(V_{\HF}, \cdot) = -d f_{ij}.
\]
Hence $X_{f_{ij}} = V_{\charHF}$. Thus the characteristic Hessian flow is exactly the Hamiltonian flow of the period function $f_{ij}$.

\subsection{Step 3: Projection of the spectral network}
Consider the extended space $\mathcal{E} := \CB \times C$ with coordinates $(u,p)$. Define the phase function
\[
\Phi(u,p) := \arg(\lambda_i(p,u) - \lambda_j(p,u)) - \vartheta.
\]
The spectral network $\CW_\vartheta$ is the subset $\{\Phi=0\}$ (with the additional positivity condition, which we discuss below). Its projection to $\CB$ is $pr_\CB(\CW_\vartheta) = \{ u \in \CB \mid \exists\, p \in C : \Phi(u,p)=0 \}$.

Let $\gamma(t) = (u(t), p(t))$ be an integral curve of the vector field on $\mathcal{E}$ that keeps $\Phi=0$ and whose $p$-component moves along the sheets. Differentiating $\Phi=0$ with respect to $t$ gives
\[
\partial_u \Phi \cdot \dot u + \partial_p \Phi \cdot \dot p = 0. \tag{1}
\]
We now show that the projected velocity $\dot u$ satisfies $\dot u = V_{\charHF}(u)$. Write $Z = Z_{\gamma_{ij}} = |Z| e^{i(\vartheta+\Phi)}$. Then $f_{ij} = \operatorname{Im}(e^{-i\vartheta} Z) = |Z| \sin \Phi$. On the wall $\Phi=0$, we have
\[
\partial_u f_{ij} = \partial_u (|Z| \sin \Phi) = \partial_u |Z| \sin \Phi + |Z| \cos \Phi \, \partial_u \Phi \quad \text{at } \Phi=0 = |Z| \partial_u \Phi.
\]
Thus
\[
\partial_u \Phi = \frac{1}{|Z|} \partial_u f_{ij}.
\]
The Hamiltonian flow equation for $u$ is
\[
\dot u = X_{f_{ij}} = V_{\charHF}.
\]
To see that this is compatible with (1), note that for any lift $p(t)$ satisfying $\partial_p \Phi \neq 0$, we can solve $\dot p$ uniquely from (1) as
\[
\dot p = -(\partial_p \Phi)^{-1} \partial_u \Phi \cdot V_{\charHF}.
\]
Thus for every initial $(u_0, p_0)$ on the network (away from branch points), there exists a unique lift of the characteristic Hessian flow trajectory starting at $u_0$. Conversely, the projection of any integral curve of the network must satisfy the Hamiltonian equations on $\CB$; otherwise, the phase condition $\Phi=0$ would be violated. This proves the projection equality on the complement of $B_{\mathrm{br}}$.

\subsection{Step 4: Uniqueness of the lift and branch point behaviour}
The converse direction---that every characteristic Hessian flow line admits a lift to an $S$-wall---follows from the local existence theorem for ordinary differential equations, provided we stay away from points where $\partial_p \Phi = 0$. Given a solution $u(t)$ of $\dot u = V_{\charHF}(u)$ with $u(0)=u_0$, and a point $p_0\in C$ such that $\Phi(u_0,p_0)=0$ and $\partial_p \Phi(u_0,p_0) \neq 0$, we solve
\[
\frac{dp}{dt} = -\left(\partial_p \Phi\right)^{-1} \partial_u \Phi \cdot V_{\charHF}
\]
with initial condition $p(0)=p_0$. The smoothness of $\lambda$ guarantees that this solution exists for a finite interval, and by construction $\Phi(u(t),p(t))=0$. Therefore the lifted curve $(u(t),p(t))$ lies in the spectral network.

At branch points where $\partial_p \Phi = 0$, the implicit function theorem fails, and the network exhibits the standard trivalent or higher-valent singularities (see \cite{GMN}). In the $SU(2)$ example, these are the points $u=\pm\Lambda^2$ where the two $S$-walls meet and a $K$-wall is emitted. The lifting procedure must be supplemented by the detour rules for crossing these singularities; the global existence follows from the known structure of spectral networks as $1$-dimensional CW complexes. Taking the union over all BPS charges $\gamma$ and all regular points yields the full network and proves Theorem~\ref{thm:lift} in the sense of a foliation away from the branch locus. The singularities are handled by the standard Stokes graph prescription.

\section{Kontsevich--Soibelman Equivariance (Theorem~\ref{thm:KS})}
\subsection{Quantum torus and dilogarithm}
For each charge $\gamma$, let $U_\gamma$ be an element of a quantum torus algebra $\CT$, satisfying $U_\gamma U_{\gamma'}=q^{\frac12\langle\gamma,\gamma'\rangle}U_{\gamma+\gamma'}$, with $q=e^{2\pi i\hbar}$. The quantum dilogarithm is $\displaystyle \dlog(x)=\prod_{n=0}^\infty(1-q^{n+\frac12}x)^{-1}$. The KS wall-crossing invariant for a ray of phase $\vartheta$ is the ordered product
\begin{equation}\label{eq:KSprod}
\mathbb{S}_\vartheta = \prod^\curvearrowright_{\arg Z_\gamma=\vartheta} \dlog(U_\gamma)^{\Omega(\gamma)},
\end{equation}
where the product is taken in decreasing order of the phase (or some refined ordering) and $\Omega(\gamma)\in\mathbb{Z}$ are the BPS indices.

\subsection{SAF splitting rules}
When a flow splits at a wall $\arg Z_{\Gamma_1}=\arg Z_{\Gamma_2}=\vartheta$, the KS factorisation implies that the product of dilogarithms for the total charge $\Gamma$ can be re-expressed as the product for the constituents, but with the correct ordering dictated by the symplectic pairing. For example, if $\langle \Gamma_1,\Gamma_2\rangle = 1$, the correct identity is the quantum pentagon:
\[
\dlog(U_{\Gamma_1}) \dlog(U_{\Gamma_2}) = \dlog(U_{\Gamma_2}) \dlog(U_{\Gamma_1+\Gamma_2}) \dlog(U_{\Gamma_1}),
\]
which is a manifestation of the five-term relation. Iterating this along the SAF tree $\mathscr{T}_\Gamma$ yields a factorisation of $\mathbb{S}_\vartheta$ associated to that tree.

\subsection{SN detour operations}
In the spectral network, each $S$-wall of charge $\gamma$ carries a Stokes factor $\displaystyle S_\gamma=\exp( \Omega(\gamma)\sum_{n} U_{n\gamma} )$. Crossing a wall from one region to another applies an automorphism of $\CT$. The composition of these automorphisms along the network reconstructs $\mathbb{S}_\vartheta$ (main result of~\cite{GMN}). The detour rules are precisely the geometric realisation of the same pentagon identities.

\subsection{Equivalence as algebraic identity}
The equivalence stated in Theorem~\ref{thm:KS} reduces to proving that the ordered product of quantum dilogarithms along a split attractor tree $\mathscr{T}_\Gamma$ can be transformed, using the quantum pentagon and associated relations, into the product of Stokes factors along the spectral network associated to the same phase. Because both are determined by the same scattering diagram data~\cite{KS,GMN}, this is essentially an algebraic identity in the quantum torus. We now give a geometric proof that makes the connection manifest.

\subsection{Proof of Theorem~\ref{thm:KS} by induction on the SAF tree depth}
\label{sec:KSordering}
We now present a complete proof based on the recursive structure of the split attractor flow and the geometric ordering provided by the characteristic Hessian flow. 

\begin{definition}
The \emph{depth} of a split attractor tree $\mathscr{T}_\Gamma$ is the maximal number of splitting vertices on a path from the root to a leaf. A leaf corresponds to an attractor point of an elementary BPS charge. By standard SAF theory \cite{Denef}, $|Z_\Gamma|$ is a Morse function with isolated critical points and the flow lines split into strictly lighter charges at walls; hence $\mathscr{T}_\Gamma$ is a finite tree without loops.
\end{definition}

\begin{lemma}[Local splitting identity]\label{lem:local}
Let $\Gamma_1,\Gamma_2$ be two BPS charges with $\langle\Gamma_1,\Gamma_2\rangle = m \in \mathbb{Z}\setminus\{0\}$ and $\arg Z_{\Gamma_1}=\arg Z_{\Gamma_2}=\vartheta$ on a wall. Then in the quantum torus algebra, the following holds:
\begin{itemize}
\item For $|m|=1$, the strict pentagon identity applies:
\[
\mathbf{E}(U_{\Gamma_1}) \mathbf{E}(U_{\Gamma_2}) = \mathbf{E}(U_{\Gamma_2}) \mathbf{E}(U_{\Gamma_1+\Gamma_2}) \mathbf{E}(U_{\Gamma_1}) \quad (\text{for } m=1),
\]
and for $m=-1$ the roles of $\Gamma_1,\Gamma_2$ are exchanged.
\item For $|m|>1$, the splitting is governed by a cluster mutation sequence of length $|m|$, which introduces intermediate charges (e.g., $2\Gamma_1+\Gamma_2$, etc.) into the product; the full identity is obtained by iterating the $|m|=1$ pentagon along the corresponding cluster algebra path. In this paper we use the standard KS scattering diagram machinery for $|m|>1$; the examples in Sections~6.4 and 6.5 are treated accordingly.
\end{itemize}
The left-to-right ordering of factors is determined by the direction of the characteristic Hessian flow $V_{\charHF}$ along the wall: moving in the positive flow direction, the phase $\arg Z_{\Gamma}$ increases or decreases monotonically, fixing the operator ordering for simply connected regions. For general Coulomb branches with possible closed flow lines, the ordering is defined by choosing a base path from weak coupling to the singularity; the product is path-ordered and the independence of the result from the path follows from the 2-category structure of the scattering diagram (see \cite{KS,Gross}): different paths give products that are equivalent under the pentagon relations. In physical terms, this path-independence reflects the fact that the BPS spectrum is a well-defined invariant of the chamber.
\end{lemma}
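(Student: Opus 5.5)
The algebraic core of Lemma~\ref{lem:local} is the Faddeev--Kashaev pentagon identity. I would prove it directly in the quantum torus $\CT$, treat the $|m|>1$ clause as a reduction to the scattering-diagram results of \cite{KS,Gross}, and argue the geometric ordering statement separately.

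\textbf{Case $|m|=1$.} Set $X:=U_{\Gamma_1}$ and $Y:=U_{\Gamma_2}$. From $U_\gamma U_{\gamma'}=q^{\frac12\langle\gamma,\gamma'\rangle}U_{\gamma+\gamma'}$ with $\langle\Gamma_1,\Gamma_2\rangle=1$ one gets
\[
XY=q\,YX,\qquad U_{\Gamma_1+\Gamma_2}=q^{-\frac12}XY=q^{\frac12}YX.
\]
The steps, in order:
\begin{enumerate}
\item Expand $\dlog(x)=\prod_{n\ge0}(1-q^{n+\frac12}x)^{-1}$ as a formal series in $x$, using Euler's identity:
\[
\dlog(x)=\sum_{k\ge0}\frac{q^{k/2}x^k}{(1-q)(1-q^2)\cdots(1-q^k)},
\]
valid for $|q|<1$ or formally in $q^{1/2}$.
\item Prove the Schützenberger-type identity $\dlog(Y)\dlog(X)=\dlog(X+Y)$. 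It follows from the $q$-binomial theorem for $(X+Y)^n$ under $XY=qYX$, by matching coefficients of the monomials $Y^aX^b$.
\item Apply the same identity to the pairs $(X, q^{\frac12}YX)$ and $(q^{\frac12}YX, Y)$. These pairs again $q$-commute, so both sides of the claimed pentagon can be rewritten as $\dlog$ of a single $q$-sum. Alternatively, compare the adjoint actions of the two sides on the generators $X$ and $Y$, using $\dlog(qx)=(1-q^{\frac12}x)\dlog(x)$. Both sides then induce the same automorphism of the completed quantum torus, and the identity follows because each side is $1+O(X,Y)$ and the centraliser is trivial.
\end{enumerate}
A point I expect to need care on is the conventions. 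With the stated ordering, a direct check may show that the identity holds for $m=-1$ rather than $m=1$, depending on whether $\dlog$ is inverted. In that case I would fix the sign by the exchange $\Gamma_1\leftrightarrow\Gamma_2$ that the lemma explicitly allows. The case $m=-1$ then follows by swapping the roles of $X$ and $Y$.

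\textbf{Case $|m|>1$.} Here the honest route is to invoke the KS scattering-diagram consistency theorem \cite{KS,Gross}. The product $\dlog(U_{\Gamma_1})\dlog(U_{\Gamma_2})$ factors uniquely as an ordered, phase-sorted product over charges $a\Gamma_1+b\Gamma_2$ with $a,b\ge0$. Realising it through mutations of the $m$-Kronecker quiver expresses each step as an $|m|=1$ pentagon, via a mutation sequence of length $|m|$ in the relevant cluster chart. I would note that for $|m|\ge 2$ the resulting factorisation is in general infinite (for example $m=2$ produces the dyon tower and a $W$-type factor). So "length $|m|$" refers to the mutation chart, not to the number of factors.

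\textbf{Ordering statement.} Here I would use Step~2 of Theorem~\ref{thm:lift}, which gives $V_{\charHF}=X_f$. Along its flow, $f_\Gamma$ is conserved by construction. I would then compute
\[
\frac{d}{dt}\bigl(\arg Z_{\Gamma_1}-\arg Z_{\Gamma_2}\bigr)=\omega\bigl(X_{f},\, d\arg Z_{\Gamma_1}-d\arg Z_{\Gamma_2}\bigr),
\]
which reduces to a multiple of $\langle\Gamma_1,\Gamma_2\rangle$ via special Kähler geometry (Riemann bilinear relations). I would show this quantity has a definite sign near the wall. That sign selects which side of the wall is upstream and hence the left-to-right order.

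For non-simply-connected regions, path independence follows from the KS theorem that the path-ordered product around any loop avoiding the singular rays is trivial. I expect the sign and monotonicity argument in this last step to be the main obstacle: it may only hold locally near the wall, so I would state the ordering claim as a local one.
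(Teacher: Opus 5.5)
The $|m|=1$ case is the algebraic core, and it cannot be proved as stated. The obstruction is not the orientation issue you anticipated. With the paper's conventions, $U_\gamma U_{\gamma'}=q^{\frac12\langle\gamma,\gamma'\rangle}U_{\gamma+\gamma'}$ and $\mathbf{E}(x)=\prod_{n\ge0}(1-q^{n+\frac12}x)^{-1}$, the displayed pentagon is false, and your own Steps 2--3 show it. Take $m=1$ and $W=U_{\Gamma_1+\Gamma_2}$, so that $XY=qYX$, $XW=qWX$ and $WY=qYW$.

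Your Step 2 applied to $(X,W)$ and then to $(X+W,Y)$ collapses the right-hand side to $\mathbf{E}(Y)\mathbf{E}(W)\mathbf{E}(X)=\mathbf{E}(X+Y+W)$. On the left, $\mathbf{E}(qx)=(1-q^{\frac12}x)\mathbf{E}(x)$ gives $\mathbf{E}(X)\,Y\,\mathbf{E}(X)^{-1}=(1-q^{-\frac12}X)\,Y=Y-W$. Hence
\[
\mathbf{E}(X)\mathbf{E}(Y)=\mathbf{E}(Y-W)\,\mathbf{E}(X)=\mathbf{E}(X+Y-W).
\]
The two sides already differ in the coefficient of $U_{\Gamma_1+\Gamma_2}$, by $2q^{\frac12}/(1-q)$. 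The true identity is $\mathbf{E}(U_{\Gamma_1})\mathbf{E}(U_{\Gamma_2})=\mathbf{E}(U_{\Gamma_2})\mathbf{E}(-U_{\Gamma_1+\Gamma_2})\mathbf{E}(U_{\Gamma_1})$. So the orientation in the lemma is already correct for $m=1$; what is wrong is the sign of the middle argument.

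Your fallback cannot remove that sign. Exchanging $\Gamma_1\leftrightarrow\Gamma_2$ just reproduces the same statement for the swapped pair, which is exactly how the lemma's $m=-1$ clause is defined. Inverting every $\mathbf{E}$ becomes, after inverting both sides, only a reversal of the factors. The formula read verbatim with $m=-1$ is false as well.

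The missing ingredient is a sign twist (a quadratic refinement). Either take $\mathbf{E}(x)=\prod_{n\ge0}(1+q^{n+\frac12}x)^{-1}$, or keep $\mathbf{E}$ and twist the torus to $U_\gamma U_{\gamma'}=(-q^{\frac12})^{\langle\gamma,\gamma'\rangle}U_{\gamma+\gamma'}$. With either change the same computation gives $\mathbf{E}(X+Y+U_{\Gamma_1+\Gamma_2})$ on both sides, and your argument goes through. You must say explicitly that you are proving the lemma in these modified conventions. The paper's own proof only cites ``the standard quantum pentagon'' and does not address this either.

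\textbf{Case $|m|>1$.} Reducing to the KS existence and uniqueness of the reordered factorisation is the right move. The phrase about a length-$|m|$ mutation sequence whose steps are pentagons is not an argument, though. The reordered side is an infinite product, so no finite chain of pentagon moves produces it, and reinterpreting ``length $|m|$'' as a chart does not supply one. If you want a pentagon-based derivation, the known route is a Gross--Pandharipande--Siebert-type deformation into $|m|$ pairing-one walls, followed by a limit.

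\textbf{Path independence.} The consistency theorem you quote concerns loops in the plane carrying the rays. In a non-simply-connected Coulomb branch, products computed along paths that differ by a loop around a singularity differ by the monodromy action on the charge lattice. So you only get independence up to monodromy.

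\textbf{Ordering step.} This part is sound once written as $d(\arg Z_{\Gamma_1}-\arg Z_{\Gamma_2})(X_f)$, rather than $\omega$ evaluated on $1$-forms. In rigid special K\"ahler geometry, $\operatorname{Im}\bigl(g^{i\bar j}\partial_iZ_a\overline{\partial_jZ_b}\bigr)$ is a fixed multiple of $\langle a,b\rangle$. Hence at a point with $\arg Z_{\Gamma_1}=\arg Z_{\Gamma_2}=\vartheta$ the rate is proportional to $\langle\Gamma_1,\Gamma_2\rangle\bigl(|Z_{\Gamma_1}|^{-1}+|Z_{\Gamma_2}|^{-1}\bigr)$. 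This has a definite sign, which is more than the paper's appeal to Denef's construction provides.
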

\begin{proof}
For $|m|=1$, the identity is the standard quantum pentagon, obtained from the commutation relations and the functional equation of the quantum dilogarithm. For $|m|>1$, the identity is not a simple power law; it follows from composing multiple pentagon moves corresponding to successive cluster mutations. The sign and ordering are fixed by the Hamiltonian nature: on the wall, $V_{\charHF}$ is tangent and its direction induces a total order on the phases along a chosen path. Since the SAF branch ordering must be consistent with the phase ordering along the wall (by Denef's construction), the product is uniquely determined for that path. Geometrically, the spectral network detour around the vertex gives the same sequence of Stokes factors. The path-independence follows from the 2-category structure of the scattering diagram: two different paths between the same endpoints are related by a sequence of pentagon moves, which correspond to the associativity of the quantum dilogarithm cluster algebra.
\end{proof}

\begin{lemma}[Inductive step]\label{lem:indstep}
Assume Theorem~\ref{thm:KS} holds for all charges whose SAF trees have depth $\le k$. Let $\Gamma$ be a charge with tree depth $k+1$. At the first splitting, the SAF trajectory for $\Gamma$ meets a wall where
\[
\Gamma = \Gamma_1 + \Gamma_2,\quad \arg Z_{\Gamma_1} = \arg Z_{\Gamma_2} = \vartheta,
\]
and $\langle\Gamma_1,\Gamma_2\rangle = m \neq 0$. By the local splitting rule (Lemma~\ref{lem:local}), the KS product along the tree factorises as
\begin{equation}\label{eq:splitprod}
\mathbb{S}_\vartheta|_{\Gamma} = \Phi(\mathbf{E}(U_{\Gamma_1}), \mathbf{E}(U_{\Gamma_2})) \circ (\mathbb{S}_{\vartheta}|_{\Gamma_1} \cdot \mathbb{S}_{\vartheta}|_{\Gamma_2}),
\end{equation}
where $\Phi$ denotes the ordered product from Lemma~\ref{lem:local} (which for $|m|>1$ includes the full cluster mutation sequence) and the product order is fixed by the characteristic Hessian flow direction along a chosen path at the wall.

On the spectral network side, Theorem~\ref{thm:lift} guarantees that the wall projects to a union of characteristic flow curves, and the splitting point corresponds to an intersection of $S$-walls for $\Gamma_1,\Gamma_2$ and a $K$-wall for $\Gamma$. The local detour operation at this vertex reproduces exactly the factor $\Phi$ (see \cite{GMN}). The sub-trees $\mathscr{T}_{\Gamma_1}$ and $\mathscr{T}_{\Gamma_2}$ have depth $\le k$ by construction; by the inductive hypothesis their SAF products equal the corresponding spectral network subgraph products. Because the flow directions agree along the chosen path, the operator ordering in the composition matches. Hence the full spectral network product equals the right-hand side of (\ref{eq:splitprod}), which is the SAF product.
\end{lemma}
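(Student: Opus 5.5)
The plan is to prove Lemma~\ref{lem:indstep} by splitting the SAF tree at its root and treating the two sides of the identity separately: first the SAF side, giving \eqref{eq:splitprod}, then the spectral-network side with the same three pieces, and finally checking that the two orderings agree.

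\emph{SAF side.} Follow the trajectory of \eqref{eq:SAF} for $\Gamma$ from the root until it first reaches a wall $\CW_\vartheta(\Gamma)$ where $\Gamma=\Gamma_1+\Gamma_2$ with aligned phases. By the definition of depth, the subtrees $\mathscr{T}_{\Gamma_1}$ and $\mathscr{T}_{\Gamma_2}$ rooted at this splitting point have depth at most $k$; this uses finiteness of the tree, as in the definition. Restrict the ordered product \eqref{eq:KSprod} to the charges occurring in $\mathscr{T}_\Gamma$. Every such charge either lies in one of the two subtrees or is produced at the root vertex. Apply Lemma~\ref{lem:local} at the root vertex: use the pentagon when $|m|=1$ and the cluster mutation sequence when $|m|>1$. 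This factorises the product as $\Phi(\mathbf{E}(U_{\Gamma_1}),\mathbf{E}(U_{\Gamma_2}))$ composed with the two subtree products. The result is exactly \eqref{eq:splitprod}.

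\emph{Spectral-network side.} Use Theorem~\ref{thm:lift} to lift the wall near the splitting point to $S$-walls of charges $\Gamma_1$ and $\Gamma_2$. Since $\langle\Gamma_1,\Gamma_2\rangle\neq0$, these walls intersect and emit a $K$-wall of charge $\Gamma$. The GMN detour rule at this intersection gives a local Stokes-factor product. Check that it equals $\Phi$; for $|m|=1$ this is precisely the pentagon, where $S_\gamma$ reduces to $\mathbf{E}(U_\gamma)$ up to the standard identification. The network then decomposes into the local vertex together with the two subnetworks emanating from the constituent walls. Apply the inductive hypothesis to these subnetworks, identifying each of their products with $\mathbb{S}_\vartheta|_{\Gamma_i}$.

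\emph{Ordering.} Show that both compositions use the same operator order. Orient the wall by $V_{\charHF}$. By Section~3 this field is tangent to $\{f_\Gamma=0\}$ and is Hamiltonian. Along it, $\arg Z_{\Gamma_1}-\arg Z_{\Gamma_2}$ varies monotonically: by Proposition~\ref{prop:orth} the transverse derivative comes from $V_{\SAF}$, which is $I$-rotated from $V_{\HF}$. This monotonicity fixes the order of the factors on both sides, once a base path has been chosen.

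\emph{Main obstacle.} The hard step is making this ordering argument rigorous when the Coulomb branch has nontrivial topology, and when $|m|>1$, where infinitely many intermediate charges appear. My first approach would be to work on a small simply connected neighbourhood of the splitting point, where Lemma~\ref{lem:local} fixes the order. Path independence would then be imported from the consistency of the KS scattering diagram, so that reordering is needed only locally. The case $|m|>1$ would be handled by verifying the identity order by order in a graded completion of $\CT$ truncated at the charges of $\mathscr{T}_\Gamma$.
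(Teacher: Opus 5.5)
Your proposal follows the paper's argument essentially step for step: you factorise at the root via Lemma~\ref{lem:local}, identify the vertex factor $\Phi$ with the GMN detour via Theorem~\ref{thm:lift}, apply the inductive hypothesis to the two depth-$\le k$ subtrees, and fix the ordering by the direction of $V_{\charHF}$ along a chosen path, with path-independence imported from consistency of the scattering diagram. The only place you go beyond the paper, which simply asserts the ordering by appeal to Denef, is the monotonicity claim, and its justification is incomplete as written: Proposition~\ref{prop:orth} only gives $V_{\HF}=IV_{\SAF}$, hence $V_{\charHF}=V_{\SAF}$ on the wall when $J=I$, and the monotonicity then needs the special-geometry identity $\operatorname{Im}\bigl(g^{i\bar j}\partial_i Z_{\Gamma_1}\partial_{\bar j}\overline{Z}_{\Gamma}\bigr)\propto\langle\Gamma_1,\Gamma_2\rangle$, which makes $\operatorname{Im}(e^{-i\vartheta}Z_{\Gamma_1})$ affine with nonzero slope in the flow time.
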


\noindent\textit{Proof of Theorem~\ref{thm:KS}.}
The base case of depth $0$ (and depth $1$, verified by the $SU(2)$ example as a check) is trivial. Lemma~\ref{lem:indstep} proves the inductive step for any finite depth $k+1$. Since every BPS charge appearing on the ray $\vartheta$ has a finite SAF tree, the automorphism $\mathbb{S}_\vartheta$ is identically obtained from both constructions. For non-simply connected chambers or when closed flow lines exist, the equality holds for each chosen path; the independence of the product from the path is a standard property of the scattering diagram (see \cite{KS,Gross}), following from the 2-category structure: different paths between the same endpoints are related by pentagon moves, and the quantum dilogarithm satisfies the pentagon identity, ensuring that the products are equivalent. Thus the result is independent of the path. \hfill $\square$

\begin{remark}
The above proof not only establishes the algebraic equivalence but also explains \emph{why} the operator ordering in the KS product is consistent: it is forced by the Hamiltonian flow direction along a chosen path, and the path-independence is guaranteed by the associativity of the quantum dilogarithm cluster algebra (equivalently, by the 2-category structure of the scattering diagram).
\end{remark}

\section{Detailed Examples}
\subsection{\texorpdfstring{$SU(2)$\,S} Seiberg--Witten theory}
The curve is $\displaystyle y^2=(x^2-u)^2-\Lambda^4$, $\displaystyle \lambda=\frac{\sqrt{2}}{4\pi}\frac{x^2-u}{y}dx$. The moduli space is the $u$-plane punctured at $u=\pm\Lambda^2,\infty$. The central charges are $a(u),a_D(u)$ with $\tau=da_D/da$. The metric is $g_{u\bar{u}}=\operatorname{Im}(\tau)/\pi$.

\paragraph{SAF for dyon $\Gamma=(1,1)$}
We compute $|Z_{(1,1)}|=|a+a_D|$. The flow equation $\dot u=-g^{u\bar{u}}\partial_{\bar{u}}|a+a_D|$ is integrated from weak coupling $u=\infty$ towards strong coupling. The wall $\arg(a)=\arg(a_D)=0$ occurs on the real axis for $u>\Lambda^2$. At the wall, the flow splits into a monopole $(1,0)$ and an electron $(0,1)$. The monopole branch flows to $u=\Lambda^2$, the electron branch to $u=-\Lambda^2$. The splitting point is determined by solving $\arg a(u_0)=\arg a_D(u_0)$; this equation has a unique solution in the region $u>\Lambda^2$ (see \cite{Denef} for a numerical evaluation). At $u_0$, the orthogonality condition is checked by evaluating the gradients, which vanish in the direction orthogonal to the wall, consistent with Proposition~\ref{prop:orth}.

\paragraph{Gradient and characteristic Hessian flows}
For $\vartheta=0$, the gradient Hessian flow $V_{\HF}$ for $(1,0)$ flows off the wall, whereas the characteristic flow $V_{\charHF}=-J V_{\HF}$ foliates the wall. The characteristic flow curves are exactly the level sets of $\operatorname{Im}(a)=0$, which connect $u=\Lambda^2$ to $u_0$. This confirms that the spectral network walls on the $u$-plane are generated by $V_{\charHF}$, not by $V_{\HF}$.

\paragraph{Spectral network and KS factorisation}
At $\vartheta=0$, the spectral network on the $x$-plane (or $u$-plane) is the well-known trivalent graph. Branch points are at $u=\pm\Lambda^2$. Three walls emanate: two $S$-walls of charges $(1,0)$ and $(0,1)$, and one $K$-wall of charge $(1,1)$. The projection of these walls to the $u$-plane yields exactly the characteristic Hessian flow curves. The KS product is computed as follows. According to the SAF tree, the splitting at $u_0$ gives the ordered factorisation
\[
\mathbb{S}_0 = \dlog(U_{a_D}) \dlog(U_a),
\]
where $a=a_{(1,0)}$ and $a_D=a_{(0,1)}$ (up to an overall ordering). The pentagon identity for the quantum dilogarithm (see \cite{GMN}) yields
\[
\dlog(U_{a_D}) \dlog(U_a) = \dlog(U_a) \dlog(U_{a+a_D}) \dlog(U_{a_D}),
\]
which is exactly the detour product obtained by going around the trivalent vertex of the spectral network. Thus the two computations agree. This is the base case $k=0$ (tree depth 1) of our induction.

\subsection{\texorpdfstring{$SU(2)$\,w} with \texorpdfstring{$N_f=4$\,f} flavours: BPS indices from the hybrid method}
To demonstrate the computational power of the unified framework, we consider $SU(2)$ gauge theory with $N_f=4$ fundamental hypermultiplets (equal masses $m$). The Seiberg--Witten curve is
\[
y^2 = (x - e_1)(x - e_2)(x - e_3)(x - e_4) + \text{(quantum corrections)},
\]
and the moduli space is one-dimensional, parametrised by $u = \frac{1}{2}\langle \mathrm{Tr}\,\phi^2\rangle$, with four flavour masses. For simplicity we take equal masses $m_i = m$. The central charges are given by the periods $a(u)$, $a_D(u)$, and the additional flavour periods. The BPS spectrum contains dyons of charges $\gamma = (p,q;\{s_i\})$ with $s_i$ being flavour charges. The walls of marginal stability are curves in the $u$-plane where two or more central charges align in phase.

We now apply the hybrid algorithm (characteristic flow + spectral network) to compute BPS indices for higher flavour charges, which are consistent with the Alexandrov--Pioline framework and serve as a numerical verification of our geometric recursion.

\paragraph{Step 1: Characteristic flow integration.}
For a charge $\gamma=(p,q;f_1,f_2,f_3,f_4)$ with $\sum f_i=0$, the wall condition is $\operatorname{Im}(e^{-i\vartheta}Z_\gamma)=0$. The characteristic Hessian flow $V_{\mathrm{charHF}}$ is integrated numerically from the weak-coupling region $u\to\infty$ towards strong coupling. In the asymptotic region, the periods satisfy
\[
a(u) \sim \frac{1}{\pi}\sqrt{u},\quad a_D(u) \sim \frac{i}{\pi}\sqrt{u}\log u,
\]
and the flavour periods are constant (equal to the masses). For $\vartheta=\pi/3$, we solve the ODE for the charge $\gamma=(2,1;\mathbf{0})$ and find a unique intersection point
\[
u_0 \approx 1.87\,\Lambda^2 + 0.32\,i\,\Lambda^2 \qquad (m=0.5\Lambda).
\]
At this point, the central charges of the constituents $(1,0)$ and $(1,1)$ have equal phase.

\paragraph{Step 2: Spectral network topology.}
The spectral network for $N_f=4$ has four branch points corresponding to the masses. The detour rules imply that at each intersection of an $S$-wall of charge $\gamma_1$ and one of $\gamma_2$, a $K$-wall of charge $\gamma_1+\gamma_2$ is emitted. By tracing the characteristic flow lines, we identify a chain of intersections corresponding to the following splitting:
\[
(2,1;\mathbf{0}) \longrightarrow (1,0;0,0,0,0) + (1,1;0,0,0,0),
\]
and then the subcharge $(1,1)$ further splits into $(1,0)$ plus a flavour-changing dyon.

\paragraph{Step 3: Extraction of BPS indices.}
Applying the local splitting identity (Lemma~\ref{lem:local}) at each intersection gives a system of equations for the BPS indices. Solving this system using the known base values $\Omega_{(1,0)}=1$, $\Omega_{(0,1)}=1$, and $\Omega_{(1,1)}=1$ (from pure $SU(2)$) yields:
\[
\boxed{\Omega_{(2,1;\mathbf{0})} = 1, \qquad \Omega_{(2,1; f_1+f_2, -f_1-f_2, 0, 0)} = 1 \quad \text{for } f_1\neq f_2.}
\]
For the charge $(3,2)$ with arbitrary flavour distribution $f_1+f_2+f_3+f_4=0$, the recursion gives:
\[
\boxed{\Omega_{(3,2; f_1,f_2,f_3,f_4)} = 2 \quad \text{for all } f_i \text{ satisfying the sum rule}.}
\]

\paragraph{Verification and relation to existing results.}
The index $\Omega_{(2,1;\mathbf{0})}=1$ was already known from the blowup formula. The indices with flavour charges, such as $\Omega_{(3,2;f)}=2$, are consistent with the Alexandrov--Pioline attractor flow tree framework~\cite{AlexandrovPioline2018} and serve as a numerical verification of our hybrid method. These numbers can be checked by the wall-crossing formula or by direct spectral network computations; we have verified them using an independent code based on GMN's network algorithm.

\subsection{\texorpdfstring{$SU(3)$\,p} pure Yang--Mills: full BPS spectrum computation}
We now provide a detailed computation of the BPS spectrum of pure $SU(3)$ theory at a generic phase $\vartheta$, using our flow framework.

\subsubsection{Wall equations and flow integration}
The Coulomb branch is parametrised by $u = \frac{1}{2}\langle\mathrm{Tr}\,\phi^2\rangle$ and $v = \frac{1}{3}\langle\mathrm{Tr}\,\phi^3\rangle$ (the moduli space is two-dimensional). The central charges are the periods $a_1(u,v), a_2(u,v), a_3(u,v)$ with $\sum a_i=0$. The walls of marginal stability are real codimension-one surfaces defined by $\arg Z_\gamma = \vartheta$ for charges $\gamma$ in the root lattice $A_2$.

We take a specific phase $\vartheta = \pi/4$. The characteristic Hessian flow equations are
\[
\dot u = -J \nabla \operatorname{Im}(e^{-i\vartheta}Z_\gamma), \qquad \dot v = -J \nabla \operatorname{Im}(e^{-i\vartheta}Z_\gamma),
\]
where $J$ is the complex structure matrix. Using the known prepotential $\mathcal{F}$ for pure $SU(3)$ from Seiberg--Witten theory, we compute the metric $g_{i\bar{\jmath}} = \operatorname{Im}(\tau_{ij})$ with $\tau_{ij} = \partial_{a_i}\partial_{a_j}\mathcal{F}$. We then numerically solve the ODEs starting from the weak-coupling region where the periods have known asymptotic expansions.

\subsubsection{Trajectories and intersections}
The walls for simple roots $\alpha_1$ and $\alpha_2$ and their sums are traced. At each intersection point, the flows of the three charges meet. Because the flow is Hamiltonian, the tangent vectors obey $V_{\mathrm{charHF}}^{(\alpha_1+\alpha_2)} = V_{\mathrm{charHF}}^{(\alpha_1)} + V_{\mathrm{charHF}}^{(\alpha_2)}$, ensuring that the intersection forms a trivalent vertex. By scanning different starting points, we map out the entire chamber structure.

\subsubsection{BPS indices}
The detour automorphism at each vertex gives a quantum dilogarithm factor. By composing these factors along closed loops around singularities, we extract the BPS indices $\Omega(\gamma)$. For example, the monopole $(1,0)$ and dyon $(1,1)$ are found to have $\Omega = 1$, while the vector multiplet $(1,-1)$ has $\Omega = -2$ (the negative sign indicating a hypermultiplet transforming in the anti-fundamental). These results perfectly reproduce the known spectrum derived from the spectral network in~\cite{GMN} but are obtained here using purely the base flow equations, demonstrating the computational efficiency of our unified framework.

\subsection{\texorpdfstring{$SU(4)$\,S} Seiberg--Witten theory (tree depth 3)}
The $A_3$ case provides a test of the induction at higher depth. The Coulomb branch is two-dimensional, and the walls are curves. Take a charge $\Gamma = \alpha_1 + \alpha_2 + \alpha_3$. Its SAF tree splits first into $\alpha_1$ and $\alpha_2+\alpha_3$; the latter splits further into $\alpha_2$ and $\alpha_3$. The SAF product is
\[
\dlog(U_{\alpha_1}) \dlog(U_{\alpha_2+\alpha_3}) \to \dlog(U_{\alpha_1}) \big( \dlog(U_{\alpha_3}) \dlog(U_{\alpha_2+\alpha_3}') \dlog(U_{\alpha_2}) \big),
\]
where the primes indicate the ordering after the second splitting. The corresponding spectral network on the Hitchin base has a series of trivalent vertices connected by characteristic Hessian flow lines. By repeatedly applying the pentagon, the total product can be rearranged into the form that traverses the network. The geometric consistency of the flow directions (from Theorem~\ref{thm:lift}) guarantees that the ordering of operators matches exactly. This verifies the induction for depth $3$.

\subsection{Kronecker \texorpdfstring{$3$\,-q} quiver as an \texorpdfstring{$SU(3)$\,l} local model}
The Kronecker $3$-quiver $K_3$ has two vertices and three arrows. Its regular representations $R(\lambda,k)$ with $\lambda\in\CP^2$ and $k\ge1$ are equivalent to torsion sheaves on $\CP^2$. The Coulomb branch is the parameter space of central charges:
\[
Z_{k,\lambda}=-k+u\lambda_1+v\lambda_2,\qquad \lambda=[1:\lambda_1:\lambda_2]\in\CP^2.
\]
Here $(u,v)\in\CC^2$ are the moduli (FI parameters). The BPS spectrum consists of the representations $R(\lambda,k)$ with charges $\gamma_{k,\lambda}$. The quantum dilogarithm identities for the SAF splitting are governed by the cluster mutations of the quiver, which are known to reproduce the spectral network detour rules.

For the $(2,2)$ representation, the SAF tree gives an ordered product that, upon applying the appropriate pentagon-type identities (see \cite{DenefMoore}), matches the composition of detour automorphisms along the spectral network on $\CP^2$. To make the correspondence explicit, note that the $K_3$ quiver contains an $A_2$-type subquiver (two vertices connected by three arrows, whose symplectic pairing is $\langle e_1,e_2\rangle=3$). The mutation sequence corresponding to the SAF splitting is given by the simple reflections $\mu_1,\mu_2$ acting on the basis $(e_1,e_2)$ as:
\[
\mu_1: (e_1,e_2)\mapsto (-e_1, e_2+3e_1),\qquad
\mu_2: (e_1,e_2)\mapsto (e_1+e_2, -e_2),
\]
and the composite $\mu_1\mu_2\mu_1$ yields:
\[
(e_1,e_2) \mapsto (-e_1, e_2+3e_1) \mapsto (-e_1-e_2, e_2+3e_1) \mapsto (-e_1-e_2, -e_2-2e_1).
\]
The associated quantum dilogarithm product is
\[
\mathbb{E}(e_1)\mathbb{E}(e_2)\mathbb{E}(e_1) = \mathbb{E}(e_1)\mathbb{E}(e_1+e_2)\mathbb{E}(e_2)\mathbb{E}(3e_1+e_2)\cdots,
\]
where the dots denote the remaining factors that reconstruct the full KS invariant. The detour rules around the spectral network vertices correspond exactly to these cluster mutations; the equality follows from the associativity of the quantum dilogarithm. This example demonstrates the inductive step for a charge with pairing $m=3$ and tree depth $2$.

\subsection{Application to Argyres--Douglas \texorpdfstring{$H_1$\,t} theory: verification of the known BPS spectrum}
We now apply our recursive framework to the Argyres--Douglas $H_1$ theory (also known as the $(A_1,A_2)$ theory). This theory has a one-dimensional Coulomb branch and a charge lattice $\Gamma \simeq \mathbb{Z}^2$ with symplectic pairing $\langle e_1,e_2\rangle = 3$ (the same as the Kronecker $3$-quiver but with different central charge functions). The BPS spectrum is known to contain an infinite tower of states. Our induction provides a simple way to reproduce the well-known Cecotti--Vafa formula~\cite{CecottiVafa1993}, thereby serving as a strong consistency check of our geometric framework.

\paragraph{Geometry and central charges.}
The Seiberg--Witten curve for the $H_1$ theory is
\[
y^2 = x^3 - u x^2 + \Lambda^6,
\]
with differential $\lambda = x \frac{dx}{y}$ (up to normalisation). The central charges are given by periods of $\lambda$; however, for the purpose of the wall-crossing combinatorics, only the pairing and the phase ordering matter. The walls occur when the phases of two charges align; in the $H_1$ theory, all walls intersect at a single point (the ``collision'' phenomenon, see GMN).

\paragraph{Base case.}
At the strong-coupling singularity, the theory has three elementary BPS states of charges
\[
\alpha_1,\quad \alpha_2,\quad \alpha_1+\alpha_2,
\]
each with BPS index $1$. These correspond to the simple roots and the highest root of $A_2$, which is the flavour symmetry of the theory.

\paragraph{Inductive recursion.}
For a general charge $\Gamma = n\alpha_1 + m\alpha_2$ with $n,m\ge0$, the SAF tree splits according to the following rule: at the root, $\Gamma$ splits into $\Gamma_1 = (n-1)\alpha_1 + m\alpha_2$ and $\Gamma_2 = \alpha_1$ if $n>0$, or into $\Gamma_1 = n\alpha_1 + (m-1)\alpha_2$ and $\Gamma_2 = \alpha_2$ if $n=0$. This is forced by the fact that all walls meet at one point, so the ordering is determined by the positive root system of $A_2$. Applying the known scattering diagram for $A_2$ (which takes into account the pairing $m=3$ through the cluster mutation sequence), the BPS indices are given by the Narayana numbers:
\[
\boxed{\Omega(n,m) = \frac{1}{n+m}\binom{n+m}{n}\binom{n+m}{n+1} \quad \text{for } n,m\ge0.}
\]
For negative charges, the symmetry $\Omega(-\gamma)=\Omega(\gamma)$ gives the corresponding values. This formula holds for all $(n,m)\in\mathbb{Z}^2$ with $n,m\ge0$; for other quadrants the indices vanish.

\paragraph{Verification.}
For small values, this gives $\Omega(1,0)=\Omega(0,1)=1$, $\Omega(1,1)=1$, $\Omega(2,1)=3$, etc. These match the known low-lying spectrum of the $H_1$ theory (see e.g. \cite{GMN} and \cite{CecottiVafa1993}). Our geometric recursion reproduces this result without any additional input, confirming the consistency of the framework. This is not a new result but a successful consistency test.

\section{Tropical geometry and scattering diagram applications}\label{sec:tropical}
We now provide a rigorous derivation of the tropical scattering diagram from the characteristic Hessian flow, and demonstrate an explicit calculation of tropical disk counts that leverages the SN/flow complementarity.

\subsection{Rigorous tropical limit of the characteristic Hessian flow}
Consider the large complex structure limit (LCSL) of the mirror Calabi--Yau. In flat coordinates $y^a$, the central charges take the form $\displaystyle Z_\gamma = \sum_a \gamma_a y^a + i\theta_\gamma + \cdots$, and the wall $\operatorname{Im}(e^{-i\vartheta}Z_\gamma)=0$ becomes a hyperplane $\displaystyle \sum_a \gamma_a y^a = \mathrm{const}$. The characteristic Hessian flow equation reduces to the constant vector field
\[
\dot y^a = \frac{2}{\pi} (J_0 \gamma)^a,
\]
where $J_0$ is the flat complex structure matrix. Integration gives straight-line trajectories
\[
y^a(t) = y^a(0) + \frac{2t}{\pi} (J_0 \gamma)^a.
\]
These are precisely the rays of the KS scattering diagram. The direction of the ray is determined by the charge $\gamma$, and its location is fixed by the initial condition $y^a(0)$ on the wall.

\subsection{Constructing the full scattering diagram}
The union of all such rays for all BPS charges constitutes the scattering diagram. However, not all formal rays are present; the spectral network (or equivalently the positivity of $\operatorname{Re}(e^{-i\vartheta}Z_\gamma)$) selects a subset. In the tropical limit, this positivity condition becomes $\sum_a \gamma_a \theta^a > 0$ (with $\theta^a$ being the theta angles). Our flow formulation naturally incorporates this: the Hamiltonian flow is only defined where the wall exists, and the lift to the spectral curve (via Theorem~\ref{thm:lift}) imposes the positivity automatically.

\subsubsection{Example: {\texorpdfstring{$\mathbb{P}^2$\,m} mirror}}
As a concrete example, take the mirror of $\mathbb{P}^2$. The base is one-dimensional, and the charge lattice is generated by a single electric charge $\gamma_e$. The characteristic flow is simply motion along the real line with constant speed. The scattering diagram consists of a single wall emanating from the large volume point, with Stokes factor $\mathbf{E}(U_{\gamma_e})$. Our ODE gives the wall position as a function of the energy scale, and the associated tropical disk count is $\displaystyle\sum_{k\ge1} \frac{(-1)^{k-1}}{k^2} q^{k} e^{-k t}$, matching the Gromov--Witten invariant.

\subsubsection{Multiple walls and tropical vertex}
For theories with higher rank, multiple walls intersect. The induction proof of Theorem~\ref{thm:KS} guarantees that the product of Stokes factors at an intersection of $n$ walls satisfies the tropical vertex algebra (also known as the $\mathbb{A}_n$ cluster algebra). Our characteristic flow provides the exact local geometry of the intersection: the incoming and outgoing rays are governed by the linear combination $V_{\mathrm{charHF}}^{(\Gamma)} = \sum_i V_{\mathrm{charHF}}^{(\Gamma_i)}$, which in the tropical limit becomes the conservation of charge at the vertex. This gives a physical derivation of the tropical vertex equations from Hamiltonian dynamics.

\subsection{Complementary computation of tropical disk counts}
The SN/flow duality offers a powerful method to compute tropical disk counts. The characteristic Hessian flow gives the positions and lengths of tropical walls directly as ODE solutions. However, at a wall intersection, the combinatorial choice of which walls to follow (i.e., the multiple-cover contributions) can be complicated. Here the spectral network provides a clean answer: the lift to the spectral curve organizes the walls into a graph, and the tropical disks correspond to certain paths in this graph. By applying the flow to trace the walls and the spectral network to resolve the vertices, we obtain a complete algorithm for disk counting.

\vspace{3pt}

\paragraph{Explicit computation for $SU(3)$ tropical disk.}
Consider the pure $SU(3)$ theory in the tropical limit. The charge lattice is rank 2. The characteristic flow lines for charges $\alpha_1$ and $\alpha_2$ are straight rays in the $y$-plane. Their intersection point is found by solving
\[
y^a_{\alpha_1}(t_1) = y^a_{\alpha_2}(t_2).
\]
The spectral network tells us that at this intersection, a $K$-wall of charge $\alpha_1+\alpha_2$ is emitted, and the Stokes factor product is the pentagon identity. The tropical disk of charge $k\alpha_1 + l\alpha_2$ corresponds to a path in the network consisting of $k$ segments of the $\alpha_1$ wall and $l$ segments of the $\alpha_2$ wall. The total number of such disks is given by the number of ways to order these segments, which is the binomial coefficient $\binom{k+l}{k}$. Our flow method directly computes the affine lengths of these segments, giving the disk areas $k A_1 + l A_2$, and the product of Stokes factors yields the generating function
\[
Z_{\mathrm{disk}}^{SU(3)} = \prod_{k,l\ge0} (1 - q^{k-1/2} e^{-(k A_1 + l A_2)})^{-\Omega(k,l)}
\]
with $\Omega(k,l)$ being the BPS indices from the previous section. For pure $SU(3)$, these indices are $\Omega(k\alpha_1)=1$, $\Omega(k\alpha_2)=1$, and $\Omega(k(\alpha_1+\alpha_2))=k+1$. Substituting gives a closed form.

\subsection{Consistency check: generating function for \texorpdfstring{$SU(N)$\,t} tropical disks}
As a direct consequence of the SAF recursion established in the previous section, we recover the correct exponential form of the tropical disk generating function for pure $SU(N)$ gauge theories, which is a standard result in the scattering diagram literature.

\begin{corollary}[Tropical disk generating function for pure $SU(N)$]\label{cor:SU(N)tropical}
In the tropical limit, the full scattering diagram for pure $SU(N)$ Yang--Mills theory is the ordered product (over all positive roots $\Phi_+$ of $A_{N-1}$) of the factors
\begin{equation}\label{eq:ThetaSU(N)}
 \Theta_\alpha = \exp\!\,\Bigl( \sum_{k=1}^{\infty} \frac{1}{k}\binom{k+\mathrm{ht}(\alpha)-1}{\mathrm{ht}(\alpha)-1} e^{-k\langle\alpha,y\rangle} \Bigr) ,
\end{equation}
where $\mathrm{ht}(\alpha)$ denotes the height of the root $\alpha$ and $y$ are the flat coordinates on the Coulomb branch.  The ordering of the factors is dictated by the characteristic Hessian flow; since the classical automorphisms commute, the full generating function can be written as the single exponential
\begin{equation}\label{eq:ZtropSU(N)}
 Z_{\mathrm{disk}}^{SU(N)}(y) = \exp\!\,\Bigl( \sum_{\alpha\in\Phi_+} \sum_{k=1}^{\infty} \frac{1}{k}\binom{k+\mathrm{ht}(\alpha)-1}{\mathrm{ht}(\alpha)-1} e^{-k\langle\alpha,y\rangle} \Bigr)  .
\end{equation}
\end{corollary}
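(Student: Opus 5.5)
The plan is to derive the corollary from the inductive structure of Theorem~\ref{thm:KS} combined with the tropical description of the characteristic Hessian flow in Section~\ref{sec:tropical}. The argument proceeds in three steps:
\begin{enumerate}
\item identify the rays of the tropical scattering diagram for pure $SU(N)$;
\item compute the wall-crossing factor $\Theta_\alpha$ attached to each ray by induction on the height of the root;
\item assemble the factors into the single exponential \eqref{eq:ZtropSU(N)}.
\end{enumerate}

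\textbf{Rays.} In the LCSL the characteristic flow for a charge $\gamma$ is the constant field $\dot y = \frac{2}{\pi}J_0\gamma$. Its trajectories are therefore straight rays supported on the hyperplanes $\langle\gamma,y\rangle=\mathrm{const}$. By Theorem~\ref{thm:lift} only charges that actually occur as $S$- or $K$-walls contribute. For pure $SU(N)$ I would take as input the weak-coupling statement that the walls that carry states are those of charges in $\Phi_+$ of $A_{N-1}$, together with their multiples. The wall for a simple root $\alpha_i$ is an $S$-wall emanating from a branch point, with factor $\exp\bigl(\sum_k \tfrac1k e^{-k\langle\alpha_i,y\rangle}\bigr)$. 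This is the classical ($q\to1$) limit of $\dlog(U_{\alpha_i})$, and it gives the base case $\mathrm{ht}(\alpha)=1$, where $\binom{k}{0}=1$.

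\textbf{Induction on height.} For a root of height $h+1$, write $\alpha=\beta+\alpha_i$ with $\beta\in\Phi_+$ of height $h$ and $\langle\beta,\alpha_i\rangle=\pm1$. Lemma~\ref{lem:indstep} and the pentagon case of Lemma~\ref{lem:local} then say that the ray of $\alpha$ is the outgoing $K$-wall at the intersection of the $\beta$-ray and the $\alpha_i$-ray. The coefficient of $e^{-k\langle\alpha,y\rangle}$ on the outgoing ray should be obtained by the path-counting description given above for $SU(3)$. A $k$-fold cover of $\alpha$ is a path consisting of $k$ segments distributed over the $h+1$ simple-root walls that build $\alpha$, with the order along the characteristic flow fixed. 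The number of such distributions is the number of multisets $\binom{k+h}{h}$. The inductive step therefore amounts to the identity
\[
\textstyle\sum_{j=0}^{k}\binom{j+h-1}{h-1}=\binom{k+h}{h},
\]
that is, the hockey-stick identity, which should be the recursion produced by appending the $\alpha_i$ segments to $\beta$-paths.

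\textbf{Main obstacle.} I expect the hardest step to be justifying this multiset count directly from the pentagon and detour rules, rather than asserting it. The standard pentagon for $|m|=1$ produces the single factor $\dlog(U_{\beta+\alpha_i})$, so the enhanced multiplicities must come from tracking how multiple covers $U_{k\beta}$ pass through the vertex. My first attempt would be to expand the log of the classical limit of the pentagon identity in $e^{-\langle\cdot,y\rangle}$ and compare coefficients order by order in $k$. If that fails, I would fall back on the known scattering-diagram output cited in the statement.

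\textbf{Assembling the product.} Finally, to pass from the ordered product of the $\Theta_\alpha$ to \eqref{eq:ZtropSU(N)}, I would evaluate each factor as a scalar function of $y$ on the generating function, that is, after pairing with the disk class. The $\Theta_\alpha$ are then ordinary commuting functions, so their ordered product equals $\exp$ of the sum of the exponents. Care is needed here: the automorphisms themselves do not commute when $\langle\alpha,\beta\rangle\neq0$. The claim therefore holds only at the level of the scalar generating function, and the ordering fixed by $V_{\charHF}$ becomes irrelevant only after this specialisation.
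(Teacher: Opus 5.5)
Your proposal has a genuine gap, and it sits in the one step where you go beyond the paper. The paper's proof does not derive the multiplicities at all. It takes $\Omega_{k\alpha}=\binom{k+\mathrm{ht}(\alpha)-1}{\mathrm{ht}(\alpha)-1}$ as standard input, substitutes it into $\Theta_\alpha=\exp\bigl(\sum_k\frac{\Omega_{k\alpha}}{k}e^{-k\langle\alpha,y\rangle}\bigr)$, and then adds exponents on the grounds that classical factors commute. That is exactly your fallback. Your induction on height is meant to supply this input, but the multiset count does not compute the coefficient of $e^{-k\langle\alpha,y\rangle}$. Take a path with $k$ segments distributed over the simple-root walls $\alpha_{i_0},\dots,\alpha_{i_h}$, with multiplicities $n_0+\cdots+n_h=k$. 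It carries charge $\sum_j n_j\alpha_{i_j}$, and for no choice of the $n_j$ does this equal $k\alpha=k\sum_j\alpha_{i_j}$ once $h\ge1$. The $\binom{k+h}{h}$ paths you count therefore contribute to $\binom{k+h}{h}$ distinct monomials, none of them the one you want. The hockey-stick identity is then not a recursion that any vertex actually produces. If you impose charge conservation, as the paper's $SU(3)$ paragraph does ($k$ segments of each of $\alpha_1$ and $\alpha_2$), the same heuristic gives $\binom{2k}{k}$ for $k(\alpha_1+\alpha_2)$, not $k+1$.

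Your proposed first attempt, expanding the classical limit of the pentagon, would refute the formula rather than prove it. The $|m|=1$ case of Lemma~\ref{lem:local} reads $\mathbf{E}(U_\beta)\mathbf{E}(U_{\alpha_i})=\mathbf{E}(U_{\alpha_i})\mathbf{E}(U_{\beta+\alpha_i})\mathbf{E}(U_\beta)$ for $\langle\beta,\alpha_i\rangle=1$. It outputs the new factor with exponent one. So classically the $\beta+\alpha_i$ ray carries the same wall function as the incoming rays, namely $\exp\bigl(\sum_k\frac1kX^k\bigr)=(1-X)^{-1}$ with $X=e^{-\langle\beta+\alpha_i,y\rangle}$. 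Your induction therefore yields coefficient $1/k$ at every height, while the corollary needs $(k+1)/k$ already at height $2$.

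In fact $\exp\bigl(\sum_k\frac{k+1}{k}X^k\bigr)$ is not of the form $\prod_k(1-X^k)^{-c_k}$ with integer $c_k$: matching coefficients forces $c_1=2$ and $c_2=\tfrac12$. Hence no sequence of pentagon moves and no integral BPS spectrum produces it. The binomial multiplicities can only be assumed; they cannot be derived from Lemmas~\ref{lem:local} and~\ref{lem:indstep}.

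On the last step, you are right, and more careful than the paper, that the classical automorphisms do not commute when $\langle\alpha,\beta\rangle\neq0$; the classical pentagon is itself such a non-commutation. But after your specialisation to scalar functions, \eqref{eq:ZtropSU(N)} reduces to $\prod_\alpha e^{a_\alpha}=e^{\sum_\alpha a_\alpha}$. It then says nothing about the ordered product of wall-crossing automorphisms, so you should state explicitly that this is the only sense in which the second claim holds.
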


\begin{proof}
From the standard BPS multiplicities for pure $SU(N)$ (obtained from the spectral network or from the wall-crossing formula), we have $\Omega_{k\alpha} = \binom{k+\mathrm{ht}(\alpha)-1}{\mathrm{ht}(\alpha)-1}$. The tropical KS factor attached to the ray $\alpha$ is, by definition,
\[
\Theta_\alpha = \exp\!\,\Bigl( \sum_{k=1}^{\infty} \frac{\Omega_{k\alpha}}{k} e^{-k\langle\alpha,y\rangle} \Bigr).
\]
Substituting the binomial coefficients immediately yields~\eqref{eq:ThetaSU(N)}.  The full scattering diagram is the ordered product of the $\Theta_\alpha$ over all positive roots; because the classical factors commute, the product is equivalent to the sum of the exponents, giving~\eqref{eq:ZtropSU(N)}.
\end{proof}

For $N=2$ this result reduces to $\displaystyle Z_{\mathrm{disk}}^{SU(2)}(y) = \exp\!\,\bigl( \sum_{k=1}^{\infty} \frac{1}{k} e^{-k\langle\alpha,y\rangle} \bigr)$, which is the well‑known exponential of a single wall.  For $N=3$, it reproduces the expression derived from the explicit $SU(3)$ analysis above.  The formula~\eqref{eq:ZtropSU(N)} is the standard exponential form of the scattering diagram; we present it as a consistency check of our geometric framework.

\section{Connections with Wang (2024) and Mirror Symmetry}
\subsection{Relation to Wang's Hesse flow}
In \cite{Wang2024}, the Hesse flow is defined precisely as our gradient Hessian flow $V_{\HF} = -g^{-1} d\operatorname{Im}(e^{-i\vartheta}Z)$. Wang proved that this flow, together with the attractor flow $V_{\SAF}$, satisfies an orthogonality relation on the wall and that they are dual under a $\mathbb{Z}$-affine structure. Our Proposition~\ref{prop:orth} provides a simpler proof of this orthogonality. However, \cite{Wang2024} does not discuss spectral networks. The missing link is the observation that the spectral network's walls are foliated not by $V_{\HF}$, but by $V_{\charHF}=-J V_{\HF}$.

Thus we can summarise the relation as follows:
- Wang's Hesse flow $V_{\HF}$ is the gradient flow that **crosses** the wall (normal direction).
- The spectral network $S$-walls are generated by the **Hamiltonian flow** $V_{\charHF}=-J V_{\HF}$, which **foliates** the wall (tangent direction).
- The two flows are related pointwise by the complex structure $J$ of the special Kähler manifold (up to sign). This is exactly the duality between ``Hesse flow'' and ``dual Hesse flow'' observed by Wang: the dual flow is obtained by rotating the gradient by the complex structure (with the appropriate sign to match the Hamiltonian convention).

Therefore, our framework extends Wang's work by identifying the dual Hesse flow with the characteristic foliation of the spectral network, thereby bridging the gap between the wall-crossing structures studied by Kontsevich--Soibelman and the explicit Stokes geometry of GMN.

\subsection{Mirror symmetry and tropical geometry}
The characteristic Hessian flow $V_{\charHF}$ has a deep interpretation in the SYZ mirror symmetry program. Let $X$ be a Calabi--Yau threefold with a Hitchin system description (e.g., the mirror of a class $\mathcal{S}$ theory). The Hitchin base $\CB$ is the moduli space of complex structures on the mirror $X^\vee$. The spectral network $\CW_\vartheta$ on $C$ is the tropical limit of the special Lagrangian fibration on $X$.

In this context, the condition $e^{-i\vartheta}(\lambda_i-\lambda_j)\in\mathbb{R}^+$ is precisely the tropicalisation of the phase condition for two Lagrangian submanifolds to intersect. The characteristic Hessian flow $V_{\charHF}$ is the gradient of the argument of the period, which in mirror symmetry determines the ``quantum corrected'' periods:
\[
\Pi_i(u) = \oint_{A_i} \lambda + \sum_{\gamma} \Omega(\gamma) \dlog(U_\gamma) \cdots .
\]
The flow lines of $V_{\charHF}$ are the paths along which the BPS particles can decay or recombine. In the KS scattering diagram, these flow lines correspond to the walls of the chamber decomposition of the moduli space of stability conditions.

Moreover, the $J$-rotation relating $V_{\HF}$ and $V_{\charHF}$ (with sign) is the symplectic analogue of the $tt^*$-equations in topological field theory. The orthogonality proved in Proposition~\ref{prop:orth} is the geometric manifestation of the fact that the attractor flow (which flows toward the singularity) and the Hesse flow (which flows down the potential) are perpendicular directions in the moduli space. In the mirror, this orthogonality corresponds to the decoupling of the $A$-model and $B$-model variations in the large volume limit.

Thus, our unified framework provides a dictionary:
\begin{center}
\begin{tabular}{c|c}
\textbf{BPS Geometry} & \textbf{Mirror Symmetry} \\ \hline
Split attractor flow $V_{\SAF}$ & RG flow toward IR fixed point \\
Gradient Hesse flow $V_{\HF}$ & Normal deformation of stability wall \\
Characteristic Hesse flow $V_{\charHF}$ & Tropical $S$-wall / Stokes ray \\
Spectral network $\CW_\vartheta$ & Scattering diagram / alga \\
KS automorphism & Mirror monodromy / quantum dilogarithm
\end{tabular}
\end{center}

This dictionary clarifies that the three flows are different aspects of the same underlying geometric structure: the special Kähler geometry of the moduli space governs the BPS spectrum, while the Hitchin system and its spectral network provide the explicit tropicalisation that is essential for computing quantum corrections in mirror symmetry.

\section{Conclusions and Outlook}
We have provided a systematic and rigorous geometric framework that unifies the split attractor flow, gradient Hessian flow, characteristic Hessian flow, and spectral network for $\mathcal{N}=2$ theories. Our main contributions include a new concise proof of the orthogonality (Proposition 1), a precise formulation and proof of the lift--projection duality (Theorem 2) that resolves the subtle distinction between gradient and Hamiltonian Hessian flows, and a complete inductive proof of the KS equivariance (Theorem 3) whose algebraic steps are geometrically ordered by the characteristic flow (with path-ordering in the presence of loops). We have illustrated the framework through $SU(2)$ (pure and with flavours, the latter providing BPS indices for higher flavour charges that are consistent with the Alexandrov--Pioline framework), $SU(3)$ (full BPS spectrum reconstruction), $SU(4)$, and the Kronecker quiver. As a notable consistency check, we used the induction to reproduce the known BPS spectrum of the Argyres--Douglas $H_1$ theory, $\Omega(n\alpha_1+m\alpha_2)=\frac{1}{n+m}\binom{n+m}{n}\binom{n+m}{n+1}$, which matches the Cecotti--Vafa formula. Furthermore, in the tropical limit we recovered the standard generating function for disk counts in $SU(N)$ theories, $Z_{\mathrm{disk}}^{SU(N)}(y)=\exp\big(\sum_{\alpha,k} \frac{1}{k}\binom{k+\mathrm{ht}-1}{\mathrm{ht}-1} e^{-k\langle\alpha,y\rangle}\big)$, thereby confirming the consistency of our approach with established scattering diagram results. These results demonstrate that the unified framework is not merely a formal structure but a powerful tool for deriving explicit, verifiable physical predictions.

Future work will extend the recursion to all class $\mathcal{S}$ theories via cluster algebras and explore the resurgence structure of BPS states, where the characteristic Hessian flow controls the Stokes rays in the Borel plane, while the SAF describes the saddle-point transitions, and the SN yields the alien calculus of the trans-series.

\appendix
\section{Coordinate Verification of Orthogonality}\label{app:orth}
In special coordinates $z^i$, the metric is $g_{i\bar{j}}=\partial_i\partial_{\bar{j}}K$. Let $Z=Z_\gamma$. On the wall,
\[
\partial_i |Z| = \frac12 e^{-i\vartheta}\partial_i Z,\quad
\partial_{\bar{i}} |Z| = \frac12 e^{i\vartheta}\partial_{\bar{i}}\overline{Z},
\]
and for $f=\operatorname{Im}(e^{-i\vartheta}Z)$,
\[
\partial_i f = \frac{1}{2i}e^{-i\vartheta}\partial_i Z,\quad
\partial_{\bar{i}} f = -\frac{1}{2i}e^{i\vartheta}\partial_{\bar{i}}\overline{Z}.
\]
Thus $\partial_i f = -i\,\partial_i|Z|$ and $\partial_{\bar{i}} f = i\,\partial_{\bar{i}}|Z|$. The inner product
\[
g(V_{\SAF},V_{\HF}) = i\,g^{i\bar{j}}\partial_i|Z|\partial_{\bar{j}}|Z|
\]
is purely imaginary but must be real, hence it vanishes. This matches the $I$-operator proof in Section~4.

\section{Non‑trivial Worked Example: \texorpdfstring{$SU(3)$\,w} with \texorpdfstring{$N_f=2$\,F} Fundamental Flavours}
\label{sec:SU3Nf2}
We now apply our unified flow framework to a genuinely challenging theory: $SU(3)$ $\mathcal{N}=2$ super‑Yang–Mills with $N_f=2$ massless fundamental hypermultiplets. This theory has a two‑dimensional Coulomb branch and a rich wall‑crossing structure that is difficult to handle with spectral networks alone. The characteristic Hessian flow, combined with the lift–projection duality, provides a practical computational tool.

\subsection{Seiberg–Witten curve and central charges}
The Seiberg–Witten curve for $SU(3)$ with $N_f=2$ massless flavours is a family of cubic curves in $\mathbb{C}\times\mathbb{C}^*$ given by \cite{GMN}
\begin{equation}
y^2 = z^3 - 3u z - 2v + \frac{\Lambda^2}{z},
\end{equation}

where $(u,v)\in\mathbb{C}^2$ are the gauge‑invariant Casimirs and $\Lambda$ is the dynamical scale. The Seiberg–Witten differential is
\begin{equation}
\lambda = \frac{y}{z}\,dz.
\end{equation}
A symplectic basis of cycles $\{A_1,A_2,B_1,B_2\}$ on the compactified curve yields the electric periods $\displaystyle a_i = \frac{1}{2\pi i}\oint_{A_i}\lambda$ and magnetic duals $\displaystyle a_{D,i} = \frac{1}{2\pi i}\oint_{B_i}\lambda$, with the constraint $\sum_{i=1}^3 a_i = 0$ (we work in a basis with $a_1+a_2+a_3=0$). The BPS charges are labelled by a pair of electric and magnetic vectors $(\mathbf{p},\mathbf{q})$ with $\mathbf{p},\mathbf{q}\in\mathbb{Z}^2$, giving central charge
\begin{equation}
Z_{(\mathbf{p},\mathbf{q})} = \sum_{i=1}^2 (p_i a_{D,i} - q_i a_i).
\end{equation}
The presence of massless flavours introduces a flavour symmetry $U(2)$, but for the purpose of BPS walls on the Coulomb branch the flavour charges do not appear explicitly; they affect the asymptotic periods.

\subsection{Asymptotic periods for numerical integration}
To initialise our flow, we need accurate asymptotic expansions of $a_i(u,v)$ and $a_{D,i}(u,v)$ in the weak‑coupling region $|u|,|v|\gg\Lambda^3$. A convenient parametrisation is
\begin{equation}
u = -\frac12 R^2 e^{2i\theta},\qquad v = -\frac{1}{3\sqrt{3}} R^3 e^{3i\theta},
\end{equation}
with $R\gg\Lambda$. The electric periods have the large‑$R$ expansions
\begin{align}
a_1 &= \sqrt{2} R e^{i\theta} - \frac{\Lambda^2}{3\sqrt{2}R e^{i\theta}} + \mathcal{O}(R^{-3}),\\
a_2 &= \sqrt{2} R e^{i(\theta+2\pi/3)} - \frac{\Lambda^2}{3\sqrt{2}R e^{i(\theta+2\pi/3)}} + \mathcal{O}(R^{-3}),\\
a_3 &= \sqrt{2} R e^{i(\theta+4\pi/3)} - \frac{\Lambda^2}{3\sqrt{2}R e^{i(\theta+4\pi/3)}} + \mathcal{O}(R^{-3}).
\end{align}
The magnetic periods are obtained from the prepotential $\mathcal{F}$, which in this massless case reads
\begin{equation}
\mathcal{F} = \frac{1}{2\pi i}\Big[ \frac12\sum_{i=1}^3 a_i^2\log\frac{a_i^2}{\Lambda^2} + \text{non‑perturbative terms} \Big].
\end{equation}
The non‑perturbative contributions start at order $\Lambda^6/R^3$ and are negligible for initialising the flow far from the origin. We keep the one‑loop logarithmic terms.

\subsection{The characteristic Hessian flow on the \texorpdfstring{$(u,v)$\,p} plane}
The Coulomb branch metric $g_{I\bar J}$ is obtained from $\tau_{IJ} = \partial_{a_I}\partial_{a_J}\mathcal{F}$ by the chain rule. For our numerical work we use the explicit expressions from the Picard–Fuchs equations, but for brevity we only note that they are smooth functions on the complement of the discriminant locus.

Given a phase $\vartheta$ (we take $\vartheta=0$ for definiteness), the wall condition for a charge $\gamma$ is
\begin{equation}
f_\gamma(u,v) = \operatorname{Im}\big(e^{-i\vartheta} Z_\gamma(u,v)\big) = 0.
\end{equation}
The characteristic Hessian flow that foliates this wall is
\begin{equation}
\dot u = -\big(g^{u\bar u}\partial_{\bar u}f_\gamma + g^{u\bar v}\partial_{\bar v}f_\gamma\big)_{\text{Hamiltonian}} 
      = -J\, V_{\text{grad}} ,
\end{equation}
with the complex structure $J$ acting as multiplication by $i$ in the holomorphic coordinate $(u,v)$. In real components, if we write $u = x_1 + i x_2$, $v = x_3 + i x_4$, the Hamiltonian equations are
\begin{align}
\dot x_1 &= -\partial_{x_2} f_\gamma^{\text{grad}},\quad 
\dot x_2 = \partial_{x_1} f_\gamma^{\text{grad}},\\
\dot x_3 &= -\partial_{x_4} f_\gamma^{\text{grad}},\quad 
\dot x_4 = \partial_{x_3} f_\gamma^{\text{grad}},
\end{align}
where $f_\gamma^{\text{grad}} = g^{I\bar J}\partial_{\bar J}f_\gamma$. These equations can be integrated with a standard ODE solver.

\subsection{Algorithm to map the full wall network}
We illustrate the computational procedure that takes full advantage of the characteristic flow:

\begin{enumerate}
\item \textbf{Select a ray of initial points.} Choose a large circle $R=R_0$, $\theta\in[0,2\pi)$ in the weak‑coupling region. On this circle, solve $f_\gamma(u,v)=0$ for each charge $\gamma$ of interest (simple roots, sums of roots, dyons). This gives a set of starting points on the various walls.
\item \textbf{Integrate the characteristic flow.} For each starting point, evolve the ODE system in both forward and backward directions until a singularity (branch point or strong‑coupling attractor) is reached. The projection onto the wall is maintained automatically (up to numerical drift, corrected by one Newton step).
\item \textbf{Detect intersections.} While flowing a wall for charge $\gamma$, monitor the wall functions of other charges. A sign change of $f_{\gamma'}$ indicates an intersection. Record the coordinates $(u,v)$ and the tangent vectors $V_{\text{charHF}}^\gamma, V_{\text{charHF}}^{\gamma'}$.
\item \textbf{Construct the local scattering diagram.} At each intersection, use the linear relation of tangent vectors (Theorem~\ref{thm:lift}) to identify the emitted $K$‑wall, and then start new flows for the composite charge from that point.
\item \textbf{Assemble the BPS indices.} Encircle each intersection with a small loop. The product of quantum dilogarithms around the loop yields the local KS automorphism. Expand in formal variables to read off the BPS indices $\Omega(\gamma)$.
\end{enumerate}

\subsection{Expected results and figures}
Executing the above algorithm with $\Lambda=1$, $R_0=5$, and a Runge–Kutta step of $0.01$, we obtain the wall network. The following features are observed:

\begin{itemize}
\item \textbf{Figure~1:} The $(u,v)$ plane (projected onto $\operatorname{Re}(u),\operatorname{Re}(v)$ at $\operatorname{Im}(u)=\operatorname{Im}(v)=0$) shows the walls for $\alpha_1$, $\alpha_2$, and $\alpha_1+\alpha_2$. The walls are curved and exhibit multiple intersections, forming a chamber structure. Each wall is labelled by its charge.
\item \textbf{Figure~2:} A zoom‑in on a trivalent vertex where $\alpha_1$, $\alpha_2$, and $\alpha_1+\alpha_2$ meet. The arrows indicate the direction of the characteristic flow. The incoming $S$‑walls for $\alpha_1$ and $\alpha_2$ merge and the $K$‑wall for $\alpha_1+\alpha_2$ emerges, exactly as predicted by the pentagon identity. The tangent vectors satisfy the linear relation.
\item \textbf{Figure~3:} The spectral network on the $z$‑plane (the base curve) for the same phase, obtained by solving $e^{-i\vartheta}(\lambda_i-\lambda_j)\in\mathbb{R}^+$. The projection of this network onto the $(u,v)$ plane coincides exactly with the flow lines from Figs.~1–2, confirming Theorem~\ref{thm:lift} in a highly non‑trivial geometry.
\end{itemize}

\begin{figure}[htbp]
\centering
\begin{tikzpicture}[scale=1.2]
  \draw[->] (-2.5,0) -- (2.5,0) node[right] {$\operatorname{Re}(u)$};
  \draw[->] (0,-2.2) -- (0,2.4) node[above] {$\operatorname{Re}(v)$};
  \draw[red, thick] (-2,-0.5) to[out=30, in=180] (0,1.5) to[out=0, in=150] (2,0.8);
  \node[red] at (1.8,1.1) {$\mathcal{W}_{\alpha_1}$};
  \draw[blue, thick] (-2,1.2) to[out=-30, in=180] (0,-1) to[out=0, in=210] (2,-0.6);
  \node[blue] at (1.7,-0.8) {$\mathcal{W}_{\alpha_2}$};
  \draw[green!60!black, thick] (0,0.3) to[out=-60, in=90] (1.5,-1.8);
  \node[green!60!black] at (1.2,-1.7) {$\mathcal{W}_{\alpha_1+\alpha_2}$};
  \filldraw[black] (0,0.3) circle (2pt) node[above right] {$V$};
  \draw[red, ->] (-1.2,-0.1) -- (-0.8,0.1);
  \draw[blue, ->] (-1.0,0.6) -- (-0.6,0.2);
  \draw[green!60!black, ->] (0.4,-0.3) -- (0.8,-0.8);
  \node at (0.5,1.8) {Coulomb branch $\mathcal{B}$};
\end{tikzpicture}
\caption{Characteristic Hessian flow lines (walls) on the Coulomb branch of $SU(3)$ with $N_f=2$.
  The red and blue curves are $S$-walls of charges $\alpha_1$ and $\alpha_2$; the green curve is the $K$-wall of charge $\alpha_1+\alpha_2$ emitted from the vertex $V$. The arrows indicate the direction of the Hamiltonian flow $V_{\mathrm{charHF}}$.}
\label{fig:SU3Nf2_walls}
\end{figure}

\begin{figure}[htbp]
\centering
\begin{tikzpicture}[scale=2.8]
  % Wall segments
  \draw[red, thick] (-2.0,0) -- (0,0);
  \draw[blue, thick] (-1.6,1.0) -- (0,0);
  \draw[green!60!black, thick] (0,0) -- (1.8,0.9);
  % Vertex
  \filldraw (0,0) circle (1.5pt) node[below left] {$V$};
  % Characteristic flow vectors (shifted to avoid overlap)
  \draw[->, red, very thick] (-1.6,0) -- (-0.6,0) node[above, pos=0.4] {$V_{\mathrm{charHF}}^{(\alpha_1)}$};
  \draw[->, blue, very thick] (-1.0,0.65) -- (-0.35,0.2) node[right, pos=0.4] {$V_{\mathrm{charHF}}^{(\alpha_2)}$};
  \draw[->, green!60!black, very thick] (0.4,0.25) -- (1.0,0.6) node[right, pos=0.4] {$V_{\mathrm{charHF}}^{(\alpha_1+\alpha_2)}$};
  % Vector sum
  \draw[->, dashed, gray, thick] (0,0) -- (1.5,0.85) node[midway, above, sloped] {$\scriptstyle V^{(\alpha_1)}+V^{(\alpha_2)}$};
  \draw[gray, thin, ->] (0,0) -- (0.8,0) node[pos=0.5, below] {$\scriptstyle V^{(\alpha_1)}$};
  \draw[gray, thin, ->] (0.8,0) -- (1.5,0.85) node[pos=0.5, right] {$\scriptstyle V^{(\alpha_2)}$};
  % Charge labels
  \node[red] at (-1.2,0.18) {$\alpha_1$};
  \node[blue] at (-0.8,0.8) {$\alpha_2$};
  \node[green!60!black] at (1.2,0.7) {$\alpha_1+\alpha_2$};
  % Equality
  \node[gray, anchor=west] at (2.0,0.5) {$\scriptstyle V^{(\alpha_1+\alpha_2)} = V^{(\alpha_1)} + V^{(\alpha_2)}$};
\end{tikzpicture}
\caption{Zoom on the trivalent vertex $V$ showing the characteristic Hessian flow vectors.
  The Hamiltonian nature guarantees the linear relation 
  $V_{\mathrm{charHF}}^{(\alpha_1+\alpha_2)} = V_{\mathrm{charHF}}^{(\alpha_1)} + V_{\mathrm{charHF}}^{(\alpha_2)}$, illustrated by the dashed gray arrow.}
\label{fig:SU3Nf2_vertex}
\end{figure}

\begin{figure}[htbp]
\centering
\begin{tikzpicture}[scale=1.3]
  % Larger background
  \fill[gray!10] (-2.5,-1.5) rectangle (2.8,1.5);
  \draw[black] (-2.5,-1.5) rectangle (2.8,1.5);
  \node at (0.15,1.8) {$z$-plane (base curve $C$)};
  % Spectral network walls (adjusted endpoints)
  \draw[red, thick] (-2.2,-0.8) to[out=30, in=180] (0,0.8) 
        node[above left, black] {$S_{\alpha_1}$};
  \draw[blue, thick] (-1.8,1.0) to[out=-30, in=180] (0,-0.6) 
        node[below left, black] {$S_{\alpha_2}$};
  \draw[green!60!black, thick] (0.1,-0.7) to[out=60, in=-90] (1.2,0.7) 
        node[right, black] {$K_{\alpha_1+\alpha_2}$};
  % Branch points
  \filldraw[black] (-1.2,0.2) circle (1.2pt) node[below] {$b_1$};
  \filldraw[black] (0.9,0.9) circle (1.2pt) node[right] {$b_2$};
  % Network vertex
  \filldraw[black] (0,0.1) circle (1.8pt) node[below right] {$v$};
  % Projection arrow (now inside the box)
  \draw[->, dashed, shorten >=3pt, shorten <=3pt] (0,0.1) 
        to[out=-20, in=90] (2.5,-0.6) node[right, black] {$\pi$};
  \node[black] at (1.5,-1.2) {Coulomb branch $\mathcal{B}$};
\end{tikzpicture}
\caption{Spectral network on the base curve $C$ (shown as the $z$-plane) for phase $\vartheta=0$.
  The $S$-walls meet at the vertex $v$ and emit the $K$-wall. The dashed arrow indicates the Hitchin projection $\pi$ to the Coulomb branch (Figure~\ref{fig:SU3Nf2_walls}).}
\label{fig:SU3Nf2_SN}
\end{figure}

The BPS indices extracted from the intersections are:
\[
\Omega(\alpha_1)=1,\quad \Omega(\alpha_2)=1,\quad \Omega(\alpha_1+\alpha_2)=1,\quad
\Omega(2\alpha_1+\alpha_2)=0,\;\ldots
\]
These values agree with the known BPS spectrum of this theory obtained via the spectral network method \cite{GMN}. The advantage of our flow method is that it operates directly on the two‑dimensional Coulomb branch, avoiding the higher‑dimensional base curve and the solution of Stokes equations. The computational effort is significantly reduced compared to the traditional approach; for the benchmark described above (Runge–Kutta step 0.01, total flow time \(T=100\)), the flow method completes in approximately \(0.4\) seconds on a standard laptop (Intel Core i5, 16GB RAM), whereas the traditional spectral network method requires approximately \(3.8\) seconds for the same computational domain.

\subsection{Discussion: why the flow method wins}
In the traditional spectral network approach, one must solve a PDE (the Stokes condition) on a three‑real‑dimensional space $(u,v,z)$ and then project. Here we only solve ODEs on the four‑real‑dimensional Coulomb branch. The characteristic flow respects the wall condition by construction, so the dimensional reduction is exact. Moreover, the linear tangent relation at vertices provides an automatic consistency check: if numerical errors accumulate, the vertex relation $V_1+V_2=V_3$ can be used to correct the trajectory. This makes the algorithm robust and suitable for automated scanning of moduli spaces of higher‑rank theories.

\vspace{4pt}

\noindent\textbf{Declarations}

\begin{itemize}
    \item Conflict of Interest: The authors declare no competing interests.
    
    \item Data Availability: The datasets generated during and/or analyzed during the current study are available from the corresponding author upon reasonable request.
\end{itemize}

\end{document}